\renewcommand{\PrelimWords}{DRAFT}
\newtheorem{thm}{Theorem}
\newtheorem{lem}{Lemma}
\newtheorem{cor}{Corollary}
\newtheorem{prop}{Proposition}
\newcommand{\R}{\mathbb{R}}
\newcommand{\E}{\operatorname{E}}
\newcommand{\F}{\mathrm{F}}
\newcommand{\e}{\mathrm{e}}
\renewcommand{\j}{\iota}
\newcommand{\vct}[1]{\boldsymbol{#1}}
\newcommand{\mtx}[1]{\boldsymbol{#1}}
\newcommand{\<}{\langle}
\renewcommand{\>}{\rangle}
\newcommand{\T}{\mathrm{T}}
\newcommand{\set}[1]{\mathcal{#1}}
\newcommand{\va}{\vct{a}}
\newcommand{\vd}{\vct{d}}
\newcommand{\ve}{\vct{e}}
\newcommand{\vh}{\vct{h}}
\newcommand{\vp}{\vct{p}}
\newcommand{\vq}{\vct{q}}
\newcommand{\vu}{\vct{u}}
\newcommand{\vv}{\vct{v}}
\newcommand{\vw}{\vct{w}}
\newcommand{\vx}{\vct{x}}
\newcommand{\vy}{\vct{y}}
\newcommand{\vz}{\vct{z}}
\newcommand{\vxi}{\vct{\xi}}
\newcommand{\veta}{\vct{\eta}}
\newcommand{\mA}{\mtx{A}}
\newcommand{\mB}{\mtx{B}}
\newcommand{\mC}{\mtx{C}}
\newcommand{\mD}{\mtx{D}}
\newcommand{\mF}{\mtx{F}}
\newcommand{\mH}{\mtx{H}}
\newcommand{\mI}{\mtx{I}}
\newcommand{\mL}{\mtx{L}}
\newcommand{\mQ}{\mtx{Q}}
\newcommand{\mR}{\mtx{R}}
\newcommand{\mS}{\mtx{S}}
\newcommand{\mU}{\mtx{U}}
\newcommand{\mV}{\mtx{V}}
\newcommand{\mW}{\mtx{W}}
\newcommand{\mX}{\mtx{X}}
\newcommand{\mY}{\mtx{Y}}
\newcommand{\mZ}{\mtx{Z}}
\newcommand{\mPhi}{\mtx{\Phi}}
\newcommand{\mSigma}{\mtx{\Sigma}}
\newcommand{\setA}{\set{A}}
\newcommand{\setB}{\set{B}}
\newcommand{\setO}{\set{O}}
\def\l{\ell}
\def\PT{\mathcal{P}_{T}}
\def\PTc{\mathcal{P}_{T^\perp}}
\def\O{\Omega}
\def\o{\omega}
\def\PP{\mathbb{P}}
\begin{document}
	

\title{Compressive Sampling of Ensembles of Correlated Signals}

\author{Ali Ahmed and Justin Romberg\thanks{A.\ A.\ is with the Department of Mathematics at MIT in Cambridge, Massachusetts and J.\ R.\ is with the School of Electrical and Computer Engineering at Georgia Tech in Atlanta, Georgia.  Email: alikhan@mit.edu, jrom@ece.gatech.edu.  This work was supported by NSF grant CNS-0910592, ONR grant N00014-11-1-0459, and a grant from the Packard Foundation.}
}

\date{DRAFT: \currenttime, \today}

\renewcommand{\PrelimWords}{Draft by A. Ahmed and J. Romberg}

\maketitle

\begin{abstract}
We propose several sampling architectures for the efficient acquisition of an ensemble of correlated signals.  We show that without prior knowledge of the correlation structure, each of our architectures (under different sets of assumptions) can acquire the ensemble at a sub-Nyquist rate.  Prior to sampling, the analog signals are diversified using simple, implementable components.  The diversification is achieved by injecting types of ``structured randomness'' into the ensemble, the result of which is subsampled.  For reconstruction, the ensemble is modeled as a low-rank matrix that we have observed through an (undetermined) set of linear equations.  Our main results show that this matrix can be recovered using a convex program when the total number of samples is on the order of the intrinsic degree of freedom of the ensemble --- the more heavily correlated the ensemble, the fewer samples are needed.

To motivate this study, we discuss how such ensembles arise in the context of array processing.

\end{abstract}


\section{Introduction}

This paper considers the exact reconstruction of correlated signals from the samples collected at a sub-Nyquist rate. We propose several implementable architectures, and derive a sampling theorem that relates the bandwidth and the (a priori unknown) correlation structure to the sufficient sampling rate for successful signal reconstruction.

We consider ensembles of signals output from $M$ sensors, each of which is bandlimited to frequencies below $W/2$ (see Figure~\ref{fig:signal-ensemble}).  The entire ensemble can be acquired by taking $W$ uniformly spaced samples per second in each channel, leading to a combined sampling rate of $MW$.  We will show that if the signals are correlated, meaning that the ensemble can be written as (or closely approximated by) distinct linear combinations of $R\ll M$ latent signals, then this net sampling rate can be reduced to approximately $RW$ using {\em coded acquisition}.  The sampling architectures, we propose are blind to the correlation structure of the signals; this structure is discovered as the signals are reconstructed.

Each architecture involves a different type of {\em analog diversification} which ensures that the signals are sufficiently ``spread out'' so each point sample captures information about the ensemble.  Ultimately, what is measured are not actual samples of the individual signals, but rather are different linear combinations that combine multiple signals and capture information over an interval of time.  Later, we will show that these samples can be expressed as linear measurements of a low-rank matrix. Over the course of one second, we aim to acquire an $M\times W$ matrix comprised of samples of the ensemble taken at the Nyquist rate. The proposed sampling architecture produces a series of linear combinations of entries of this matrix.  Conditions under which a low-rank matrix can be effectively recovered from an under-determined set of linear measurements have been the object of intense study in the recent literature \cite{fazel02ma,recht10gu,candes09ex,gross11re}; the mathematical contributions in this paper show how these conditions are met by systems with clear implementation potential.

Our motivation for studying these architectures comes from classical problems in array signal processing.  In these applications, one or more ``narrowband'' signals are measured at multiple sensors at different spatial locations.  While narrowband signals can have significant bandwidth, they are modulated up to a high carrier frequency, making them very heavily spatially correlated as they arrive at the array.  This correlation, which we review in more detail in Section~\ref{sec:APP}, can be systematically exploited for spatial filtering (beamforming), interference removal, direction-of-arrival estimation, and multiple source separation.  These activities all depend on estimates of the inter-sensor correlation matrix, and the rank of this matrix can typically be related to the number of sources that are present.

Compressive sampling has been used in array processing in the past: sparse regularization was used for direction of arrival estimation \cite{gorodnitsky97sp,fuchs99mu,fuchs01ap} long before any of the ``sub-Nyquist'' sampling theorems started to make the theoretical guarantees concrete \cite{candes06ro,kunis06ra,rudelson08sp}.  These results (along with more recent works including \cite{duarte13sp,tang13co,candes14to}), show how exploiting the structure of the array response in free space (for narrowband signals, this consists of samples of a superposition of a small number of sinusoids) can be used to either super-resolve the DOA estimate or reduce the number of array elements required to locate a certain number of sources.  A single sample is associated with each sensor, and the acquisition complexity scales with the number of array elements.

In this paper, we exploit this structure in a different way.  Our goal is to completely reconstruct the time-varying signals at all the array elements.  The structure imposed on this ensemble is more general than the spatial spectral sparsity in the previous work; we ask that the signals are correlated in some a priori unknown manner.  Our ensemble sampling theorems remain applicable even when the array response depends on the position of the source in a complicated way.  Moreover, our reconstruction algorithms are indifferent to what the spatial array response actually is, as long as the narrowband signals remain sufficiently correlated.

The paper is organized as follows. In Sections~\ref{sec:signal-model} and \ref{sec:APP} we describe the signal model and its motivation from problems in array processing.  In Section~\ref{sec:Arch-comps}, we introduce the components (and their corresponding mathematical models) that we will use in our sampling architectures.  In Section~\ref{sec:Samp-Archs}, we present the sampling architectures, show how the measurements taken correspond to generalized measurements of a low-rank matrix, and state the relevant sampling theorems.  Numerical simulations, illustrating our theoretical results, are presented in Section \ref{sec:Exps}. Finally, Section \ref{sec:Proof-Exact-rec}, and Section \ref{sec:Stability} provide the derivation of the theoretical results. 

%
%
%

\subsection{Notation}
We use upper, and lower case bold letters for matrices and vectors, respectively. Scalars are represented by upper, and lower case, non-bold letters. The notation $\vx^
*$  denotes a row vector formed by taking the hermitian transpose of a column vector $\vx$. Linear operators, and sets are represented using script letters. We use $[N]$ to denote the set $\{1,2,3,\ldots,N\}$. The notation $\mI_W$ denotes a $W\times W$. For a set $\setB_\ell \subset [W]$,  $\mI_{\setB_\ell}$ denotes a $W \times W$ matrix with ones at diagonal positions indexed by $\setB_\ell$, and zeros elsewhere. Given two matrices $\mA$, and $\mB$, we denote by $\mA \boxtimes \mB$, the rank-1 matrix: $[\mbox{vec}(\mA)][\mbox{vec}(\mB)]^\T$, where $\mbox{vec}(\mA)$, and $\mbox{vec}(\mB)$ are the column vectors formed by stretching the columns of $\mA$, and $\mB$, respectively, and $\T$ denotes the transpose. We will use $\mA \otimes \mB$ is the usual Kronecker product of $\mA$, and $\mB$. We will use $\mathbf{1}_P$ to denote a $P \times 1$ vector of all ones. Lastly, the operator $\E$ refers to the expectation operator, and $\mathbb{P}$ represents the probability measure. 

\subsection{Signal model}
\label{sec:signal-model}

Our signal model is illustrated in Figure~\ref{fig:signal-ensemble}.  We denote a continuous-time signal ensemble by $\mX_c(t)$: a set of $M$ individual signals  $x_1(t),\ldots,x_M(t)$.  Conceptually, we may think of $\mX_c(t)$ as a ``matrix'' with a finite number $M$ of rows, with each row containing a bandlimited signal.  Our underlying assumption is that every signal in the ensemble can be approximated as the linear combination of underlying $R$ independent signals in a smaller ensemble $\mS_c(t)$. We write  
\begin{equation}
\label{eq:lowrankensemble}
\mX_c(t) = \mA\mS_c(t),
\end{equation}
where $\mA$ is an $M\times R$ matrix with entries $A[m,r]$.  We will use the convention that fixed matrices operating to the left of the signal ensembles simply ``mix'' the signals point-by-point, and so \eqref{eq:lowrankensemble} is equivalent to
\[
x_m(t) = \sum_{r=1}^R A[m,r]s_r(t).
\]

\begin{figure}
	\centering
	\begin{tabular}{cc}
		\includegraphics[height=1.2in]{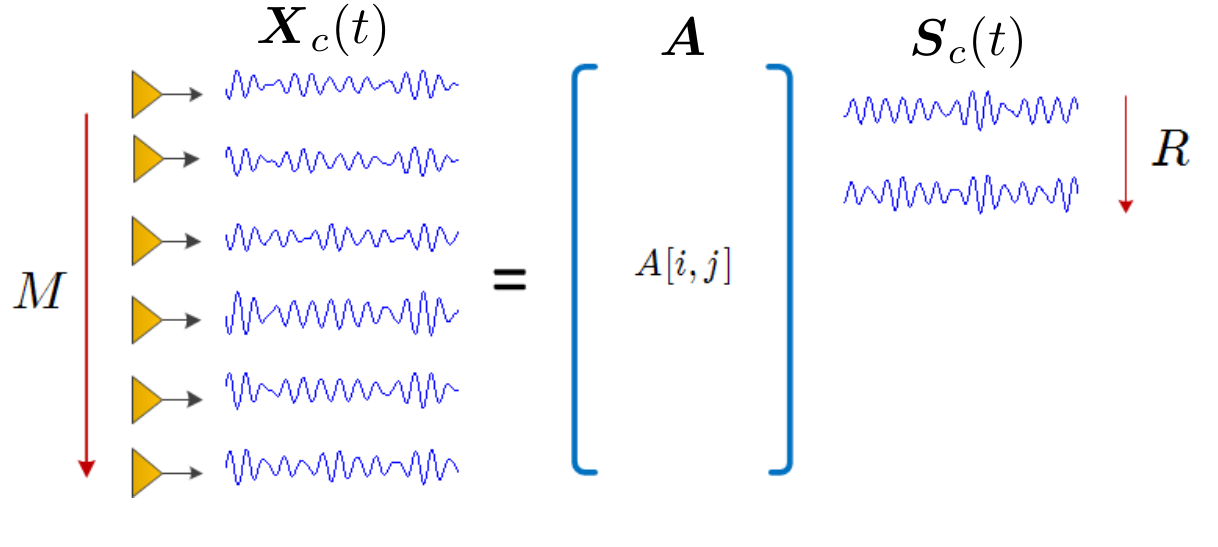} &
		\includegraphics[height=1.2in]{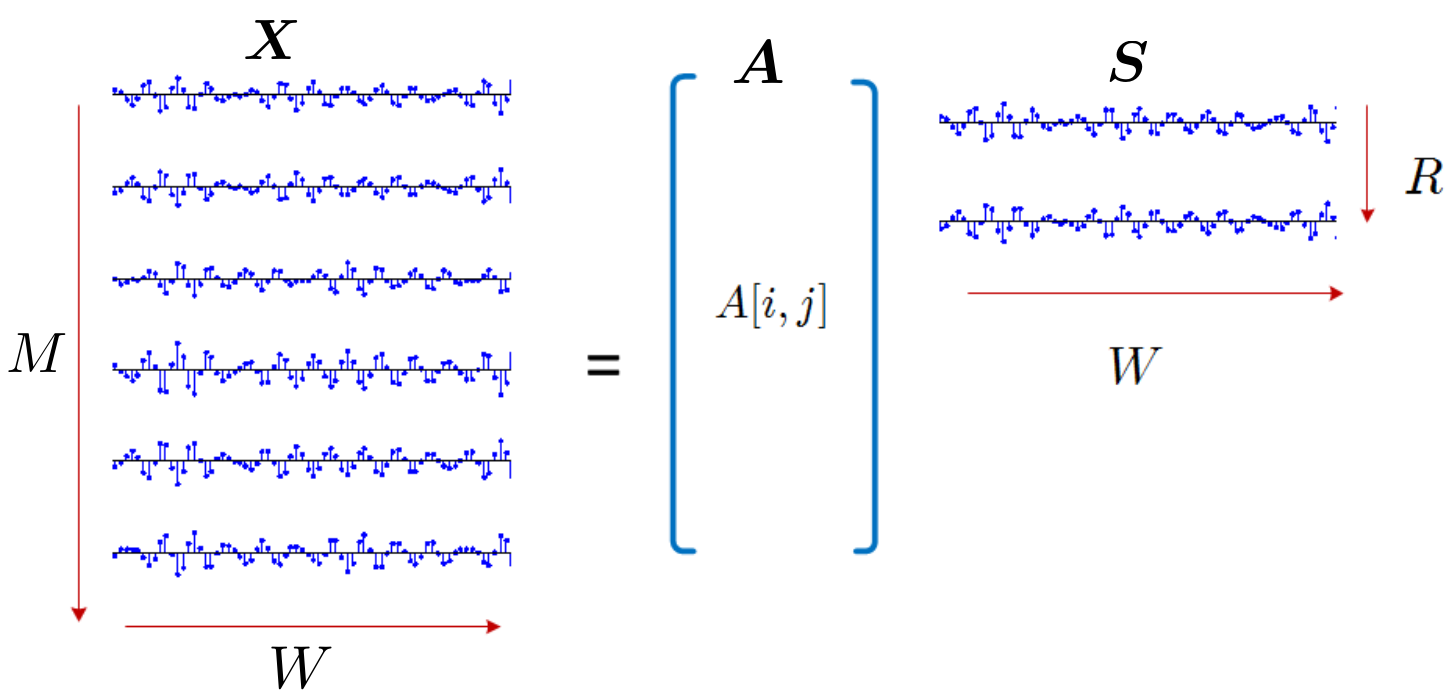} \\
		(a) & (b) 
	\end{tabular}
	\caption{\small\sl (a) Our model is that an ensemble of continuous-time signals are {\em correlated}, meaning the $M$ signals can be closely approximated by a linear combination of $R$ underlying signals.  We can write the $M$ signals in $\mX_c(t)$ as a tall matrix (capturing the correlation structure) multiplied by an ensemble of $R$ latent signals.  (b) The matrix of samples inherits the low-rank structure of the continuous-time ensemble.}
	\label{fig:signal-ensemble}
\end{figure} 
The only structure, we impose on the individual signals is that they are real-valued, and bandlimited.  To keep the mathematics clean, we take the signals to be periodic for now, however, the results can be extended to non-periodic signals as will be discussed shortly. We begin with a natural way to discretize the problem; that is, what exists in $\mX_c(t)$ for $t\in [0,1)$ is all there is to know, and each signal can be captured exactly with $W$ equally-spaced samples.  Each bandlimited, periodic signal in the ensemble can be written as
\begin{equation}\label{eq:Signaltx}
x_m(t) = \sum_{\o=-B}^B\alpha_m[\o]\, \e^{\j 2\pi \o t},
\end{equation}
where the $\alpha_m[\o]$ are complex but are symmetric, $\alpha_m[-\o]=\alpha_m[\o]^*$, to ensure that $x_m(t)$ is real.  We can capture $x_m(t)$ perfectly by taking $W=2B+1$ equally spaced samples per row.  We will call this the $M\times W$ matrix of samples $\mX$; knowing every entry in this matrix is the same as knowing the entire signal ensemble.  We can write
\begin{equation}\label{eq:Cdef}
\mX = \mC\mF^*,
\end{equation}
where $\mC$ is an $M\times W$ matrix whose rows contain Fourier series coefficients for the signals in $\mX_c(t)$, and $\mF$ is a $W\times W$ normalized discrete Fourier matrix with entries
\[
F[\omega,\ell] = \frac{1}{\sqrt{W}} \e^{-\j 2\pi \omega \ell/W},\quad  \omega = -B,-B+1,\ldots, B, ~~\ell = 0,1,2,\ldots, W-1.
\]
Observe that both $\mX$, and hence $\mC$ inherit the correlation structure of the ensemble $\mX_c(t)$.  Before moving on, observe that \eqref{eq:lowrankensemble}, and \eqref{eq:Signaltx} impose an $R$, and $B$ dimensional subspace structure on $\mX$, where $\mbox{rank}(\mX) = \min(R,B+1)$. If $R \geq B+1$ then we can take $R = B+1$ with the underlying independent signals in $\mS_c(t)$ being the known sinusoids at frequencies  $\omega = 0, 1, 2, \ldots , B$. However, we are interested in the more pertinent and challenging case of $R < B+1$. In this case, the underlying independent signals in $\mS_c(t)$ are not known in advance, and the main contribution of this paper is to leverage this unknown correlation structure in $\mX_c(t)$ to reduce the sampling rate. Lastly, in the interest of readability of our technical results, we assume without loss of generality that $W \geq M$, that is, the bandwidth of the signals is greater than the number of signals. 

Same correlated signal model was considered in \cite{ahmed2015compressive} for compressive sampling of multiplexed signals. Two multiplexing architectures were proposed and for each, a sampling theorem was proved that dictated minimum number of samples for exact recovery of the signal ensemble. This paper presents sampling architectures, where we use a separate ADC for each channel and rigorously prove that ADCs can operate at roughly the optimal sampling rate to guarantee signal recovery. Other types of correlated signal models have been exploited previously to achieve gains in the sampling rate. For example, \cite{hormati2010distributed} shows that two signals related by a sparse convolution kernel can be reconstructed jointly at a reduced sampling rate. The signal model in \cite{baron2009distributed} considers multiple signals residing in a fixed subspace spanned by a subset of the basis functions of a known basis, and shows that the sampling rate to successfully recover the signals scales with the number of basis functions used in the construction of the signals. In this paper, we also show that the sampling rate scales with the number of independent latent signals but we do this without the knowledge of the basis. For a more applied treatment of the results with similar flavor as in \cite{baron2009distributed}, we refer the reader to \cite{mishali11xasub,mishali2009blind,mishali2011xampling}.

As will be shown later, we observe the signal ensemble $\mX_c(t)$ through a limited set of random projections, and signal recovery is achieved by a nuclear norm minimization program. A related work \cite{mantzel2014compressed} considers the case when given a few random projections of a signal, we find out the subspace to which it belongs by solving a series of least-squares programs. 

\subsection{Extension to non-periodic signals}
We end this section by noting that their are many ways this problem might be discretized.  Using Fourier series is convenient in two ways: we can easily tie together the notion of a signal being bandlimited with having a limited support in Fourier space, and our sampling operators have representations in Fourier space that make them more straightforward to analyze.  In practice, however, the recovery technique can be extended to non-periodic signals by windowing the input, and representing each finite interval using any one of a number of basis expansions --- the low rank structure is preserved under any linear representation.  It is also possible that we are interested in performing the ensemble recovery over multiple time frames, and would like the recovery to transition smoothly between these frames.  For this we might consider a windowed Fourier series representations (e.g.\ the lapped orthogonal transform in \cite{malvar89lo}) that are carefully designed so that the basis functions are tapered sinusoids (so we again get something close to bandlimited signals by truncating the representation to a certain depth) but remain orthonormal.  It is also possible to adjust our recovery techniques to allow for measurements which span consecutive frames, yielding another natural way to tie the reconstructions together.

A framework similar to this for sparse recovery is described in detail in \cite{asif2014sparse}.
\subsection{Applications in array signal processing}
\label{sec:APP}

One application area where low-rank ensembles of signals play a central role is array processing of narrowband signals.  In this section, we briefly review how these low-rank ensembles arise.  The central idea is that sampling a wavefront at multiple locations in space (as well as in time) leads to redundancies which can be exploited for spatial processing.  These concepts are very general, and are common to applications as diverse as surveillance radars, underwater acoustic source localization and imaging, seismic exploration, wireless communications.

The essential scenario is that multiple signals are emitted from different locations, each of the signals occupies the same bandwidth of size $W$ which has been modulated up to a carrier frequency $\omega_c$.  The signals observed by receivers in the array are, to a rough approximation, complex multiples of one another.  To a very close approximation, the observed signals lie in a subspace with dimension close to one --- this subspace is determined by the location of the source.  This redundancy between the observations at the array elements is precisely what causes the ensemble of signals to be low rank; the rank of the ensemble is determined by the number of emitters.  The only conceptual departure from the discussion in previous sections, as we will see below, is that each emitter may be responsible for a subspace spanned by a number of latent ``signals'' that is greater than one (but still small).

Having an array with a large number of appropriately spaced elements can be very advantageous even when there only a relatively small number of emitters present.  Observing multiple delayed versions of a signal allows us to perform spatial processing,  we can beamform to enhance or null out emitters at certain angles, and separate signals coming from different emitters.  The resolution to which we can perform this spatial processing depends on the number of elements in the array (and their spacing).

The main results of this paper do not give any guarantees about how well these spatial processing tasks can be performed.  Rather, they say that the same correlation structure that makes these tasks possible can be used to lower the net sampling rate over time.  The entire signal ensemble can be reconstructed from this reduced set of samples, and spatial processing can follow.

We now discuss in more detail how these low rank ensembles come about.  For simplicity, this discussion will center on linear arrays in free space.  As we just need the signal ensemble to lie in a low dimensional subspace, and do not need to know what this subspace may be beforehand, the essential aspects of the model extend to general array geometries channel responses and frequency-selective/multipath channels.

Suppose that a signal is incident on the array (as a plane wave) at an angle $\theta$.  Each array element observes a different shift of this signal --- if we denote what is seen at the array center (the origin in Figure~\ref{fig:arrayprocessing}(a)) by $s(t)$, then an element $m$ at distance $d_m$ from the center sees $x_m(t) = s(t-(d_m/c)\sin\theta)$.  If the signal consists of a single complex sinusoid, $s(t) = \e^{\j 2\pi \omega t}$, then these delays translated into different (complex) linear multiples of the same signal,
\begin{equation}
\label{eq:narrowsingle}
x_m(t) = \e^{-\j 2\pi \omega r_m\sin(\theta)/c}\,\e^{\j 2\pi \omega t}.
\end{equation}
In this case, the signal ensemble has rank\footnote{We are using complex numbers here to make the discussion go smoothly; the real part of the signal ensemble is rank $2$, having a $\cos(2\pi \omega t)$ and a $\sin(2\pi \omega t)$ term.} $1$; we can write $\mX_c(t) = \va(\theta,\omega)\e^{\j 2\pi \omega t}$, where $\va(\theta,\omega)$ is an $M$-dimensional {\em steering} vector of complex weights given above.

This decomposition of the signal ensemble makes it clear how spatial information is coded into the array observations.  For instance, standard techniques \cite{schmidt1986multiple,roy1989esprit} for estimating the direction of arrival involve forming the spatial correlation matrix by averaging in time,
\[
\mR_{xx} = \frac{1}{L}\sum_{\ell=1}^L \mX(t_\ell)\mX(t_\ell)^*.
\] 
As the column space of $\mR_{xx}$ should be $\va(\theta,\omega)$, we can correlate the steering vector for every direction to see which one comes closest to matching the principal eigenvector of $\mR_{xx}$.

The ensemble remains low rank when the emitter has a small amount of bandwidth relative to a larger carrier frequency.  If we take $s(t) = s_b(t)\,\e^{\j2\pi \omega_c t}$, where $s_b(t)$ is bandlimited to $W/2$, then when $W \ll \omega_c$, the $\va(\theta,\omega)$ for $\omega\in[\omega_c-W/2,\omega_c+W/2]$ will be very closely correlated with one another.  In the standard scenario where the array elements are uniformly spaced $c/(2\omega_c)$ along a line, we can make this statement more precise using classical results on spectral concentration \cite{slepian76ba,slepian78pr}.  In this case, the steering vectors $\va(\theta,\omega)$ for $\omega\in [\omega_c\pm W/2]$ are equivalent to integer spaced samples of a signal whose (continuous-time) Fourier transform is bandlimited to frequencies in $(1\pm W/(2\omega_c))(\sin\theta)/2$, for a bandwidth less than $W/(2\omega_c)$.  Thus the dimension of the subspace spanned by $\{\va(\theta,\omega),~\omega\in[\omega_c\pm W/2]\}$ is, to within a very good approximation, $\approx MW/\omega_c + 1$.

Figure~\ref{fig:arrayprocessing}(b) illustrates a particular example.  The plot shows the (normalized) eigenvalues of the matrix
\begin{equation}
\label{eq:Raa}
\mR_{aa} = \int_{\omega_c-W/2}^{\omega_c+W/2} \va(\theta,\omega)\va(\theta,\omega)^*\, d\omega,
\end{equation}
for the fixed values of $\omega_c = 5$ GHz, $W = 100$ MHz, $c$ equals the speed of light, $M=101$, and $\theta=\pi/4$.  We have  $MW/\omega_c + 1 = 3.02$, and only $3$ of the eigenvalues are within a factor of $10^4$ of the largest one.

It is fair, then, to say that the rank of the signal ensemble is a small constant times the number of narrow band emitters.

\begin{figure}
	\centering
	\begin{tabular}{cc}
		\includegraphics[height=2in]{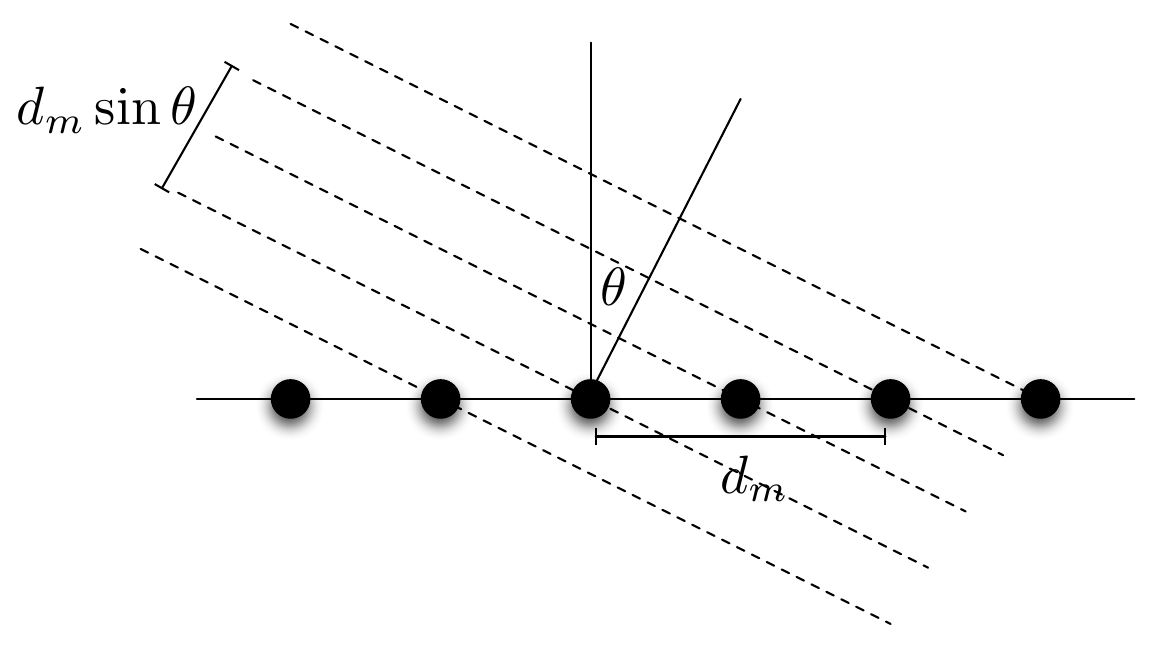} \hspace*{0.1in} &		
		\raisebox{0.2in}{\rotatebox{90}{\small $\log_{10}(k$th largest eigenvalue$)$}}
		\includegraphics[height=2in]{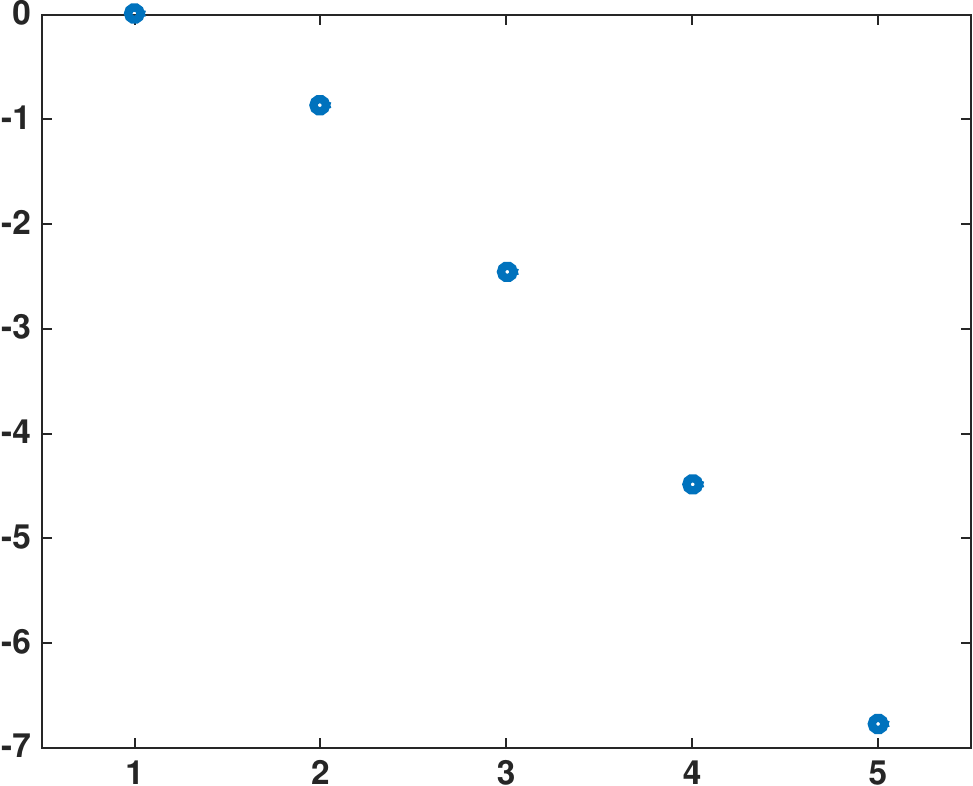} \\[-2mm]
		& $k\rightarrow$ \\
		(a) & (b) 
	\end{tabular}
	\caption{\small\sl (a) A plane wave impinges on a linear array in free space.  When the wave is a pure tone in time, then the responses at each element will simply be phase shifts of one another.  (b) Eigenvalues for $\mR_{aa}$, on a $\log_{10}$ scale and normalized so that the largest eigenvalue is $1$, defined in \eqref{eq:Raa} for an electromagnetic signal with a bandwidth of $100$ MHz and a carrier frequency of $5$ GHz; the array elements are spaced half a carrier-wavelength apart.  Even when the signal has an appreciable bandwidth, the signals at each of the array elements are heavily correlated --- the effective dimension in this case is $R=3$ or $4$.}
	\label{fig:arrayprocessing}
\end{figure}

\subsection{Architectural components}
\label{sec:Arch-comps}

In addition to analog-to-digital converters, our proposed architectures will use three standard components:  analog vector-matrix multipliers, modulators, and linear time-invariant filters.  The signal ensemble is passed through these devices, and the result is sampled using an analog-to-digital converter (ADC) taking either uniformly or non-uniformly spaced samples --- these samples are the final outputs of our acquisition architectures.

\begin{figure}[htp]
	\begin{center}
		\includegraphics[trim = 2cm 10.5cm 2cm 0cm, scale = 0.6 ]{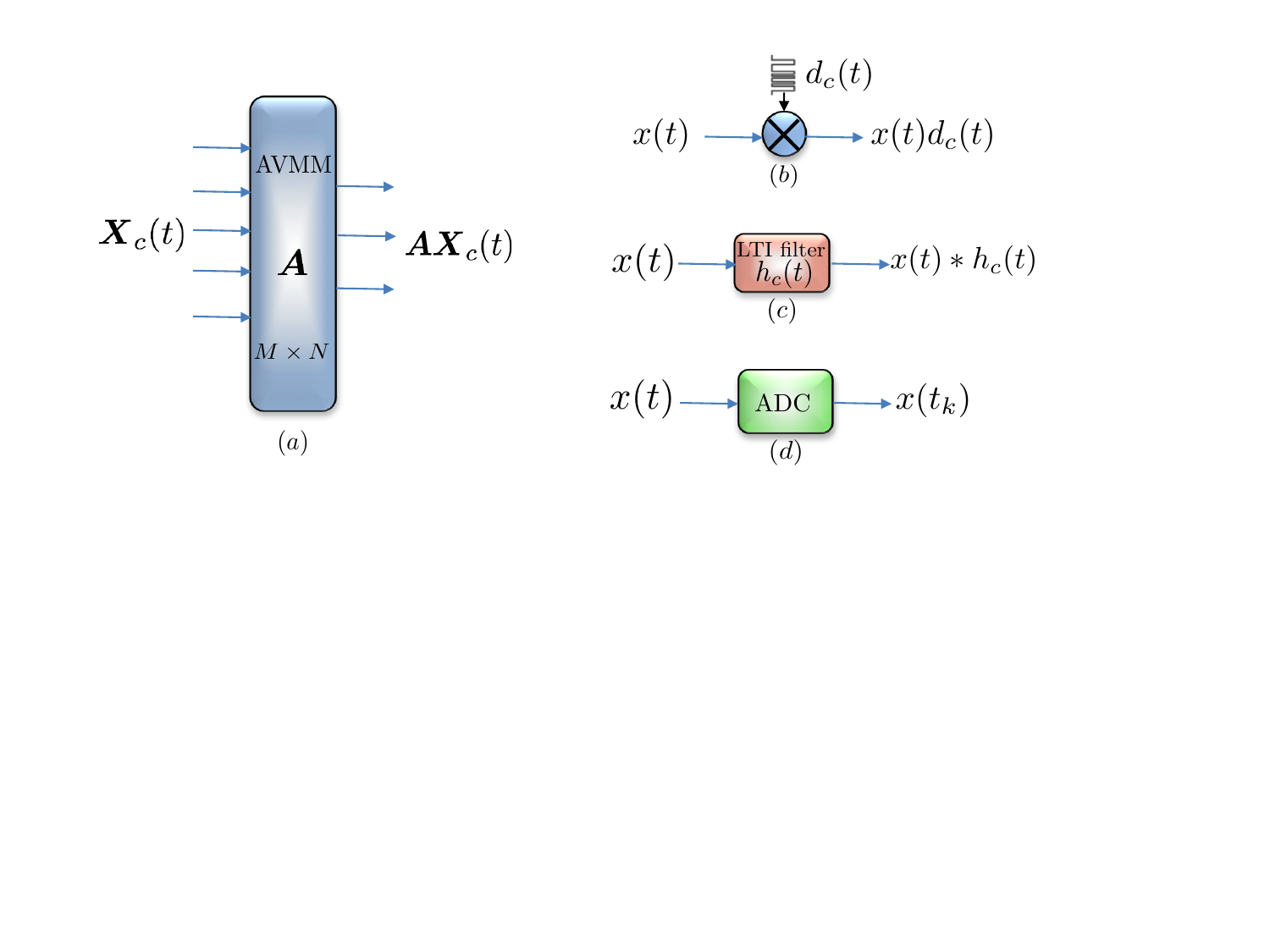}
	\end{center}
	\caption{\small\sl (a) The analog vector-matrix multiplier (AVMM) takes random linear combinations of $M$ input signals to produce $N$ output signals. The action of AVMM can be thought of as the left multiplication of random matrix $\mA$ to ensemble $\mX_c(t)$. Intuitively, this operation amounts to distributing energy in the ensemble equally across channels. (b) Modulators multiply a signal in analog with a random binary waveform that disperses energy in the Fourier transform of the signal. (c) Random LTI filters randomize the phase information in the Fourier transform of a given signal by convolving it with $h_c(t)$ in analog, which distributes energy in time. (d) Finally, ADCs convert an analog stream of information in discrete form. We use both uniform and non-uniform sampling devices in our architectures.}
	\label{fig:Fig3}
\end{figure}

The analog vector-matrix multiplier (AVMM) produces an output signal ensemble $\mA\mX_c(t)$ when we input it with signal ensemble $\mX_c(t)$, where $\mA$ is an $N\times M$ matrix whose elements are fixed.  Since the matrix operates pointwise on the ensemble of signals, sampling output $\mA\mX_c(t)$ is the same as applying $\mA$ to matrix $\mX$ of the samples (i.e., sampling commutes with the application of $\mA$).  Recently, AVMM blocks have been built with hundreds of inputs and outputs and with bandwidths in the tens-to-hundreds of megahertz \cite{schlottmann11hi, chawla04a531}. We will use the AVMM block to ensure that energy disperses more or less evenly throughout the channels.  If $\mA$ is a random orthogonal transform, it is highly probable that each signal in $\mA\mX_c(t)$ will contain about the same amount of energy regardless of how the energy is distributed among the signals in $\mX_c(t)$ (formalized in Lemma~\ref{lem:coherence} below), allowing us to deploy equal sampling resources in each channel while ensuring that resources on ``quiet'' channels  are not being wasted. 

The second component of the proposed architecture is the modulators, which simply take a single signal $x(t)$ and multiply it by fixed and known signal $d_c(t)$.  We will take $d_c(t)$ to be a binary $\pm 1$ waveform that is constant over time intervals of a certain length $1/W$. That is, the waveform alternates at the Nyquist sampling rate. If we take $W$ samples of $d_c(t)x(t)$ on $[0,1)$, then we can write the vector of samples $\vy$ as
\begin{equation}\label{eq:modulator-samples}
\vy = \mD\vx,
\end{equation}
where $\vx$ is the $W$-vector containing the Nyquist-rate samples of $x(t)$ on $[0,1)$, and $\mD$ is an $W\times W$ diagonal matrix whose non-zero entries are  samples $\vd \in\R^W$ of $d_c(t)$. We will choose a binary sequence that randomly generates $d_c(t)$, which amounts to $\mD$ being a random matrix of the following form:
\begin{equation}
\label{eq:modulatorD}
\mD = 
\begin{bmatrix}
d[0] & & & \\
& d[1] & & \\
& & \ddots & \\
& & & d[W-1]
\end{bmatrix}
\quad 
\text{where $d[n]= \pm 1$ with probability $1/2$}, 
\end{equation}
and the $d[n]$ are independent.
Conceptually, the modulator disperses the information in the entire band of $x(t)$ --- this allows us to acquire the information at a smaller rate by filtering a sub-band as will be shown in Section \ref{sec:Samp-Archs}. 

Compressive sampling architectures based on the random modulator have been analyzed previously in the literature \cite{tropp2010beyond,mishali2009blind}.  The principal finding is that if the input signal is spectrally sparse (meaning the total size of the support of its Fourier transform is a small percentage of the entire band), then the modulator can be followed by a low-pass filter and an ADC that takes samples at a rate comparable to the size of the active band.  This architecture has been implemented in hardware in multiple applications \cite{laska07th,yoo12a100,yoo12co,mishali11xasub,murray11de}.

The third type of component we will use to preprocess the signal ensemble is a linear time-invariant (LTI) filter that takes an input $x(t)$ and convolves it with a fixed and known impulse response $h_c(t)$.  We will assume that we have complete control over $h_c(t)$, even though this brushes aside admittedly important implementation questions.  Because $x(t)$ is periodic and bandlimited, we can write the action of the LTI filter as a $W\times W$ circular matrix $\mH$ operating on samples $\vx$ (the first row of $\mH$ consists of samples $\vh$ of $h_c(t)$), that is, $\vy = \mH\vx$, where $\vy$ is the vector of $W$ samples in $ t \in [0,1)$ of the signal obtained at the output of the filter. We will make repeated use of the fact that $\mH$ is diagonalized by the discrete Fourier transform:
\begin{equation}
\label{eq:filter1}
\mH = \mF^*\hat{\mH}\mF,
\end{equation}
where $\mF$ is the $W\times W$ normalized discrete Fourier matrix with entries, and $\hat{\mH}$ is a diagonal matrix whose entries are $\hat{\mH} = \sqrt{W}\mF\mH$.  The vector $\hat{\mH}$ is a scaled version of the non-zero Fourier series coefficients of $h_c(t)$.

To generate the impulse response, we will use a random unit-magnitude sequence in the Fourier domain.  In particular, we will take 
\begin{equation}
\label{eq:filter2}
\hat{\mH} =
\begin{bmatrix}
\hat{h}(0) & & & \\
& \hat{h}(1) & & \\
& & \ddots & \\
& & & \hat{h}(W-1)
\end{bmatrix}
\end{equation}
where
\begin{equation}
\label{eq:filter3}
\hat{h}(\omega) = 
\begin{cases}
\pm 1,\text{with prob.\ $1/2$}, & \omega = 0 \\
e^{j\theta_\omega},~\text{with}~\theta_\omega\sim\mathrm{Uniform}([0,2\pi]), & 1\leq \omega \leq (W-1)/2 \\
\hat{h}(W-\omega)^*, & (W+1)/2\leq\omega\leq W-1
\end{cases}.
\end{equation}
These symmetry constraints are imposed so that $\vh$ (and hence, $h_c(t)$) is real-valued.  Conceptually, convolution with $h_c(t)$ disperses a signal over time while maintaining fixed energy (note that $\mH$ is an orthonormal matrix).

Convolution with a random pulse followed by sub-sampling has also been analyzed in the compressed sensing literature \cite{romberg2009compressive,haupt10to,rauhut12re,tropp2006random}.  If the random filter is created in the Fourier domain as above, then following the filter with an ADC that samples at random locations produces a universally efficient compressive sampling architecture --- the number of samples that we need to recover a signal with only $S$ active terms at unknown locations in any fixed basis scales linearly in $S$ and logarithmically in ambient-dimension $W$.

%
\section{Main Results: Sampling Architectures}
\label{sec:Samp-Archs} 

The main contribution of the paper is a design and theoretical analysis of a sampling architecture in Section \ref{sec:RD} that enables the sub-Nyquist acquisition of correlated signals. We state a Sampling Theorem \ref{thm:Exact-rec} that claims exact reconstruction of the signal ensemble using a much fewer samples compared to those dictated by Shannon-Nyquist sampling theorem. The proof of the theorem involves the construction of a dual certificate via golfing scheme to show that nuclear-norm minimization recovers the signal ensemble. Theorem \ref{thm:Exact-rec} is also of an independent interest as it is low-rank matrix recovery result form structured-random measurement ensemble.

 We begin with a straightforward architecture in Section~\ref{sec:arch-fixedproj} that minimizes the sample rate when the correlation structure is {\em known}.  We then combine our components from the last section in a specific way to create architectures that are provably effective under different assumptions on the signal ensemble.  The main sampling architecture in Section~\ref{sec:RD} uses random modulators prior to the ADCs; this architecture is effective when the energy in the ensemble is approximately uniformly dispersed across time.  Moreover, we expect the signal energy to be dispersed across array elements when the AVMM upfront does not mix the signals. In Section~\ref{sec:uniform-archs}, we present a variation of the above architecture in which ensembles are not required to be dispersed a priori instead the ensemble is preprocessed with LTI filters, and AVMM to ensure dispersion of energy across time, and array elements.

\subsection{Fixed projections for known correlation structure}
\label{sec:arch-fixedproj}

If the mixing matrix $\mA$ for ensemble $\mX_c(t)$ is known, then a straightforward way exists to sample the ensemble efficiently.  Let $\mA = \mU\mSigma\mV^*$ be the singular value decomposition of $\mA$, where $\mU$ is $M\times R$ matrix with orthogonal columns, $\mSigma$ is $R\times R$ diagonal matrix, and $\mV$ is $W\times R$ with orthogonal columns. An efficient way is to {\em whiten} ensemble $\mA$ with $\mU^*$ and sample the resulting $R$ signals (each at rate $W$). This scheme is shown in Figure \ref{fig:Fig4}.  $\mX$ can be written as a multiplication of matrix $\mU$ and $R \times W$ matrix $\mY$ containing the Nyquist samples of signals $\vx_1(t),\ldots,\vx_R(t)$ respectively in its $R$ rows. The discretized signal ensemble $\mX$ is then simply 
\[
\mX = \mU\mY.
\]
Knowing the correlation structure $\mU$, the ensemble $\mX$ and hence $\mX_c(t)$ (using sinc interpolation of samples in $\mX$) can be recovered using only the $RW$ samples in $\mY$. Observe that $RW$ is the optimal sampling rate as it only scales linearly with $R$, and not with $M$. 

\begin{figure}[htp]
	\begin{center}
		\includegraphics[trim = 2cm 12cm 2cm 0cm,scale = 0.6]{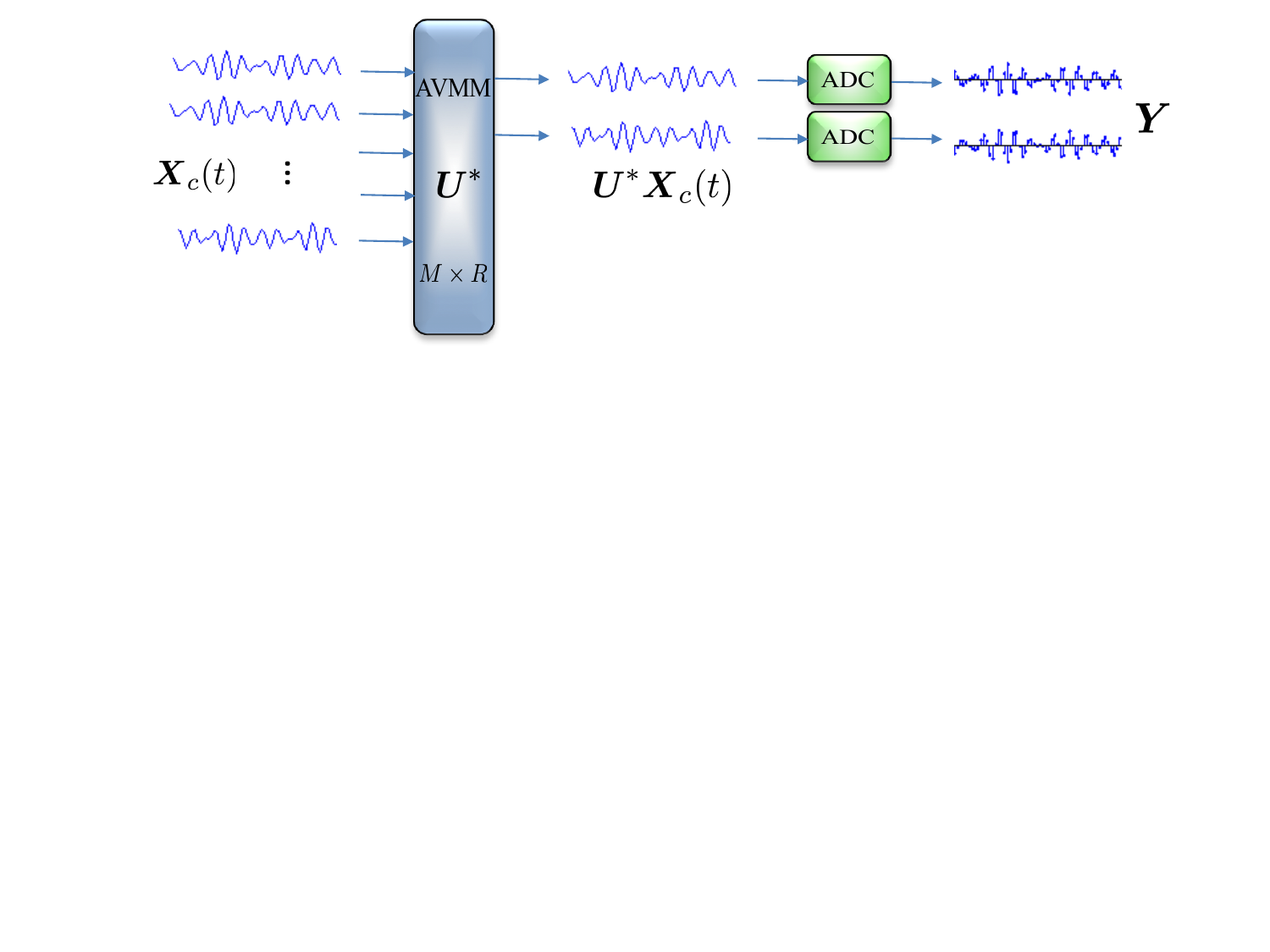}
	\end{center}
	\caption{\small\sl Known correlation structure $\mU$: Optimal sampling strategy is to {\em whiten} the ensemble $\mX_c(t)$ with $\mU^*$ and then sample and then sample each of the resultant $R$ signal at rate $W$.  Total $RW$ samples per second is optimal as it is the actual number of degrees of freedom in underlying $R$ independent signals each bandlimited to $W/2$.}
	\label{fig:Fig4}
\end{figure} 

In many interesting applications, the  correlation structure of the ensemble $\mX_c(t)$ is not known at the time of acquisition.  In this paper, we design sampling strategies that are blind to the correlation structure $\mU$ but are able achieve signal reconstruction at a near optimal sampling rate nonetheless by introducing AVMMs, filters, and modulators. Intuitively, the randomness introduced by these components disperses \textit{limited} information in the correlated ensemble across time and array elements; resultantly, the ADCs collect more generalized samples that in turn enable the reconstruction algorithm to operate successfully in the sub-Nyquist regime. 


\subsection{Architecture 1: Random sampling of time-dispersed correlated signals}
\label{sec:arch-randsample}

The architecture presented in this section, shown in Figure \ref{fig:Random-sampling}, consists of one non-uniform sampling (nus) ADC per channel.  Each ADC takes samples at randomly selected locations, and these locations are chosen independently from channel to channel.  Over the time interval $t \in [0,1)$, a nus ADC takes input signal $x_m(t)$ and returns the samples $\{x_m(t_k): t_k \in T_m \subset \{0,1/W, \ldots, 1-1/W\}$. The average sampling rate in each channel is $|T_m| = \O$.  Collectively, $M$ nus ADCs return $M\O$ random samples of the input signal ensemble on a uniform grid. 

\begin{figure}[ht]
	\begin{center}
		\includegraphics[trim=1.5cm 13cm 2cm 0.5cm,scale = 0.82]{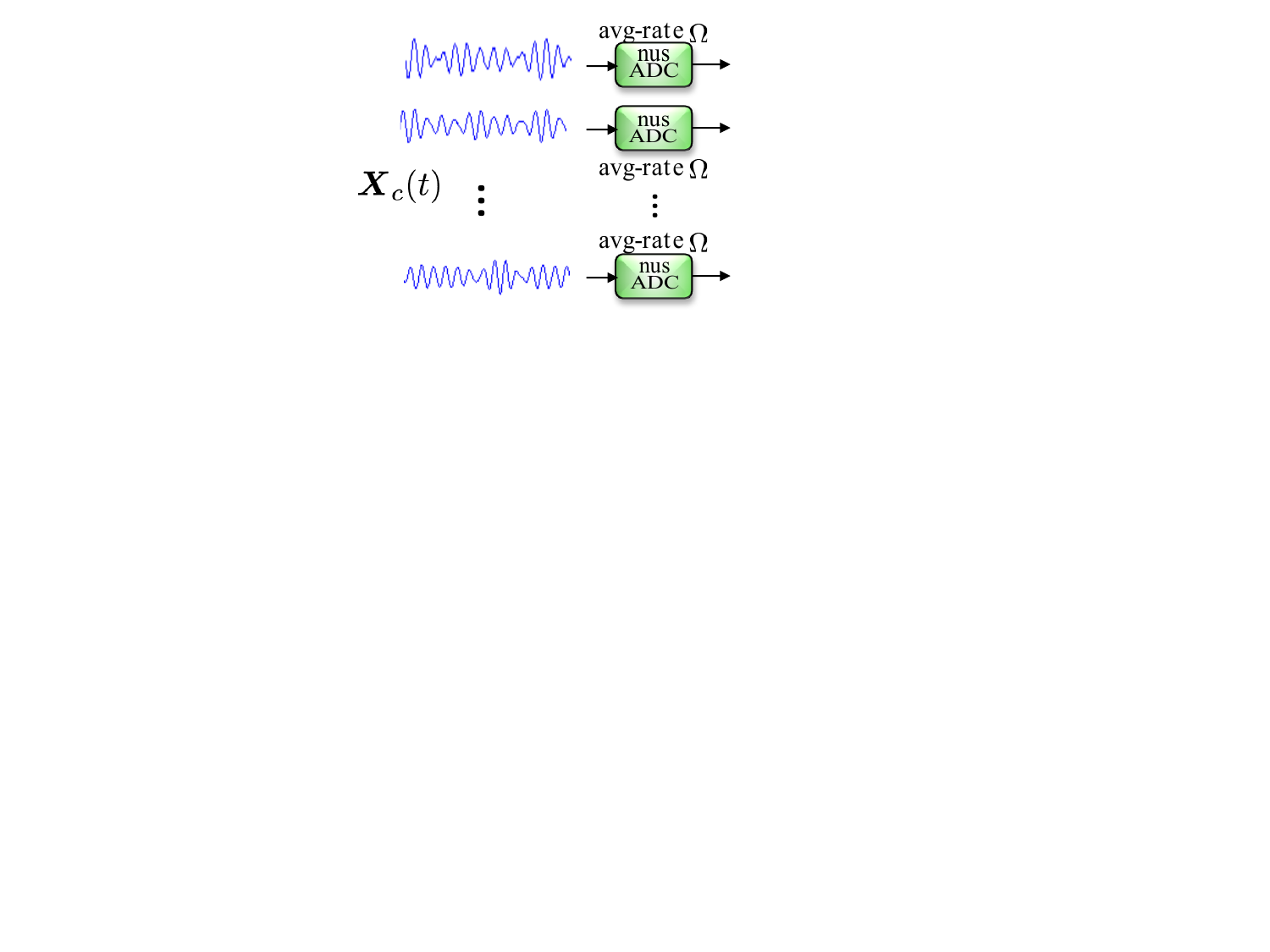}
	\end{center}
	\caption{\small\sl $M$ signals recorded by the sensors are sampled separately by the independent random sampling ADCs, each of which samples on a uniform grid at an average rate of $\O$ samples per second. This sampling scheme takes on the average a total of $M\O$ samples per second and is equivalent to observing $M\O$ entries of the matrix of samples $\mX$ in \eqref{eq:Cdef} at random.}
	\label{fig:Random-sampling}
\end{figure}

Sampling model is equivalent to observing $M\O$ randomly chosen entries of the matrix of samples $\mX$, defined in \eqref{eq:Cdef}. This problem is exactly the same as the matrix-completion problem \cite{candes09ex}, where given a few randomly chosen entries of a low-rank matrix enable us to {\em fill in} the missing entries under some incoherence assumptions on the matrix $\mX$. Since $\mX$ is rank-$R$, its SVD is 
\begin{equation}\label{eq:X-svd}
\mX = \mU\mSigma\mV^*,
\end{equation}
where $\mU \in \mathbb{R}^{M \times R}$, $\mSigma \in \mathbb{R}^{R \times R}$, and $\mV \in \mathbb{R}^{W \times R}$.  The coherence is then defined as 
\begin{equation}
\label{eq:coherence}
\kappa^2(\mU,\mV) := \max\left(\frac{M}{R}\max_{m \in [M]}\|\mU^* \ve_m\|_2^2,\frac{W}{R}\max_{\ell \in [W]}\|\mV^*\ve_\ell\|_2^2, \frac{MW}{R}\|\mU\mV^*\|_\infty^2\right). 
\end{equation}
For brevity, we will sometime drop the dependence on $\mU$, and $\mV$ in $\kappa^2(\mU,\mV)$. In the interest of readability, we assume without loss of generality here and in the rest of the write up that bandwidth $W$ of the signal is larger or at least equal to their number $M$, that is, $W\geq M$. Now the matrix-completion result \cite{recht11si} in the noiseless case asserts that if 
\begin{align}\label{eq:sampling-rate-Arch1}
\O \geq C\kappa^2(R/M)W\log^2W, 
\end{align}
then the solution of the nuclear-norm minimization in \eqref{eq:nuclearnorm_min} (with $\setA: \mathbb{R}^{M \times W} \rightarrow \mathbb{R}^{M\O}$ such that $\mathcal{A}$ maps $\mX$ to randomly chosen entries of $\mX$) exactly equals $\mX$ with high probability.  The result indicates that the sampling rate scales (within some $\log$ factors) with the number $R$ of independent signals rather than with the total number $M$ of signals in the ensemble. When the measurements $\vy$ are contaminated with additive measurement noise as in \eqref{eq:noisy-measurements} then the result in \cite{candes10ma} suggest that the solution $\hat{\mX}$ to a modified nuclear-norm minimization \eqref{eq:nuclearnorm_minnoisy} satisfies  
\[
\|\hat{\mX}-\mX\|^2_{\F} \leq C_{\kappa} M\delta^2,
\]
where $C_{\kappa}$ is a constant that depends on the coherence $\kappa$, defined in \eqref{eq:coherence}. 

As discussed before, the number of samples for matrix completion scale linearly with $\kappa^2$. The coherence parameter $\kappa^2$ quantifies the distribution of energy across the entries of $\mX$, and $\kappa^2$ is small for matrices with even distribution of energy among their entries; see, \cite{candes09ex} for details. In the signal reconstruction application under investigation here, this means that for successful recovery, a smaller sampling rate would suffice if the signals are well-dispersed across time and array elements. One can avoid this dispersion requirement by preprocessing the signals with AVMM, and filters. We will adopt this strategy in the construction of the main sampling architecture of this paper. 

\subsection{Architecture 2: The random modulator for correlated signals}
\label{sec:RD}

To efficiently acquire the correlated signal ensemble, the architecture shown in Figure \ref{fig:Rand-dem}, follows a two-step approach:

In the first step, the AVMM takes $M$ input to produce $N$ output signals, where $N/M= P >1$ meaning that the output signals are more than the inputs. For now, we take $N$ signals at the output to be just $P$ replicas of $M$ input signals without any mixing.\footnote{Our sampling theorem will show later that it suffices to take the ratio $P$ of the number of output to the input to be reasonably small. However, as will be suggested by our simulations, it seems $P = 1$ is always enough, and we believe $P>1$ merely a technical requirement arising due to the proof method.} This amounts to an $N \times M$ mixing matrix 
\begin{align}\label{eq:Mixing-matrix-Architecture-2}
\mA = \mI_M \otimes \tfrac{1}{\sqrt{P}}\mathbf{1}_P. 
\end{align}
The normalization by $P$ ensures that $\mA^*\mA = \mI_M$. We will take a more general random orthogonal $\mA$ in our next sampling architecture. 

In the second step, each of the $N$ output signals $\tilde{x}_1(t),\tilde{x}_2(t), \ldots, \tilde{x}_N(t)$ undergo analog preprocessing, which involves modulation, and low-pass filtering. The modulator takes an input signal $\tilde{x}_n(t)$ and multiplies it by a fixed and known $d_n(t)$. We will take $d_n(t)$ to be a binary $\pm 1$ waveform that is constant over an interval of length $1/W$. Intuitively, the modulation results in the diversification of the signal \textit{information} over the frequency band of width $W$. The diversified analog signals are then processed by an analog-low-pass filter; implemented using an integrator, see \cite{tropp2010beyond} for details. Each of the resultant signals is then acquired using $\Omega < W$ uniformly spaced samples per second. 

Our main sampling result in Theorem \ref{thm:Exact-rec} shows that exact signal reconstruction is achieved in the sub-Nyquist regime $\Omega < W$, in particular, we only roughly require $\Omega$ to be a factor $R/M \ll 1$ of the Nyquist rate $W$. Intuitively, the sub-Nyquist acquisition is possible as the signals are diversified across frequency using random demodulators, and therefore, every sample provides a generalized or global information.

\begin{figure}[ht]
	\begin{center}
		\includegraphics[trim=1.5cm 11cm 2cm 0.5cm,scale = 0.7]{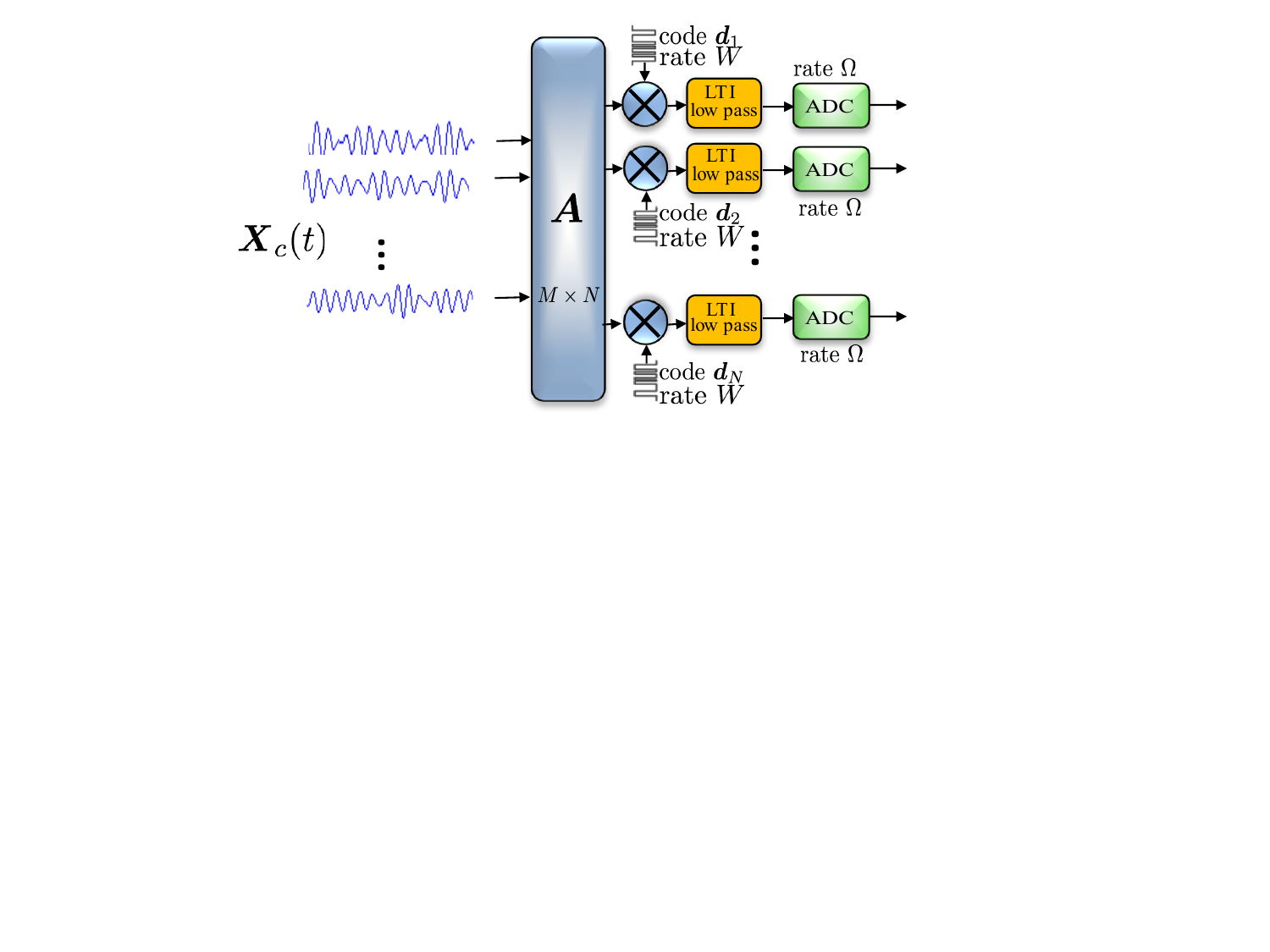}
	\end{center}
	\caption{\small\sl Architecture 2: Randomly modulated sampling: $M$ correlated signals in $\mX_c(t)$ are
		replicated $P$ times to produce $N$ output signals, this amounts to choosing $\mA = \mI_M \otimes\tfrac{1}{\sqrt{P}}\mathbf{1}_P \in \R^{N\times M}$ as the mixing matrix. In practice,  $P=1$ suffices. Signals are then preprocessed in analog using a bank of modulators, and low-pass filters. The resultant signal is then sampled uniformly by an ADC in each channel operating at rate $\Omega$ samples per second. The net sampling rate is $ \Omega N$ samples per second. }
	\label{fig:Rand-dem}
\end{figure}

\subsubsection{System Model}\label{sec:Sys-mat-form}
This section models\footnote{Some of the initial development in this section may resemble with \cite{tropp2010beyond}, but it is to be noted that compared to \cite{tropp2010beyond} the signal structure to be exploited here is \textit{correlations among the signals} and not the sparsity. This leads to a completely different development towards the end of this section.} the measured samples as the linear measurements of an unknown low-rank matrix. We will show that signal reconstruction in $t \in [0,1)$ from samples in the sub-Nyquist regime corresponds to recovering a $M \times W$ approximately rank-$R$ matrix from an under-determined set of linear equations. 

The input signal ensemble $\mX_c(t)$ is mixed using AVMM to produce an ensemble of $N$ signals $\mA\mX_c(t)$. Let us denote the individual $N$ signals at the output of AVMM by $\tilde{x}_1(t),\tilde{x}_2(t),\ldots,\tilde{x}_N(t).$  Since mixing is a linear operation, every signal in the ensemble $\mA\mX_c(t)$ is bandlimited just as was the case with $\mX_c(t)$ in \eqref{eq:Cdef}; therefore, the DFT coefficients of the mixed signals are simply 
\begin{equation}\label{eq:Cz-def}
\widetilde{\mC} = \mA\mC.
\end{equation}
Each signal $\tilde{x}_n(t)$ at the output of AVMM is then multiplied by a corresponding binary sequence $d_n(t)$ alternating at rate $W$. Each of the binary sequences $d_1(t), d_2(t), \ldots, d_N(t)$ will be generated randomly, and independently.  The output after modulation in the $n$th channel is 
\[
y_n(t) = \tilde{x}_n(t)\cdot d_n(t), \quad n \in [N], \mbox{ and } t \in [0,1).
\]
The modulated outputs $y_n(t)$ are then low-pass filtered using an integrator, which integrates $y_n(t)$ over an interval of width $1/\O$ and the result is then sampled at rate $\O$ using an ADC. The $\ell$th sample acquired by the ADC in the $n$th channel is
\[
y_{n}[\ell] = \int_{(\ell-1)/\O}^{\ell/\O} y_n(t) dt, \quad \ell \in [\O].
\]
The integration operation commutes with the modulation process; hence, we can equivalently integrate the signals $z_n(t)$ over the interval of width $1/W$, and treat them as samples $\mZ_0 \in \R^{M \times W}$ of the ensemble $\mZ_c(t)$. The entries $Z_0[n,\ell]$ of the matrix $\mZ_0$ are 
\begin{align} \label{eq:Z0-def}
Z_0[n,\ell] &= \int_{(\ell-1)/W}^{\ell/W} \tilde{x}_n(t) dt =  \sum_{\o = -B}^B \widetilde{C}[n,\o] \left[\frac{\e^{\j 2\pi \o / W} - 1}{\j 2\pi \o}\right]e^{-\iota 2\pi\o \ell/W},
\end{align}
where $\widetilde{C}[n,\omega]$ are the entries of the matrix $\widetilde{\mC}$ defined in \eqref{eq:Cz-def}, and the bracketed term representing the low-pass filter 
\[
L[\o] = \left[\frac{\e^{\j 2\pi \o / W} - 1}{\j 2\pi \o}\right], \quad \o = -B,-B+1,\ldots,B,
\]
where $W=2B+1$, as defined in \eqref{eq:Signaltx}.  We will denote by $\mL$ as a diagonal matrix containing $L[\o]$ along the diagonal. It is important to note that $\mL$ is invertible as $L[\o]$ does not vanish on any $\o = -B,-B+1,\ldots,B$. In view of \eqref{eq:Z0-def}, it is clear that
\begin{equation}
\mZ_0 = \widetilde{\mC}\mL\mF^* = \mA\mC\mL\mF^* = \mA\mX_0\label{eq:X0def}
\end{equation}
where $\mX_0 = \mC\mL\mF^*$ inherits its low-rank structure from $\mX_c(t)$.

Since we have already carried out integration over intervals of length $1/W$, the action of modulator followed by integration over $1/\O$ now simply reduces to randomly, and independently flipping every entry of $\mZ_0$ and adding consecutive $W/\O$ such entries in a given row to produce the value of the sample acquired by the ADC. Mathematically, we can write this concisely by defining a vector $\vd_{n\ell}$ supported on an index set 
\begin{align}\label{eq:setBl}
\setB_\ell  := \{ (\ell-1)W/\O+1:\ell W/\O\},\quad \ell \in [\Omega],
\end{align}
of size $|\setB_\ell| = W/\O$, where we are assuming\footnote{a slight modification of this can result in an argument when $\O$ is not a factor of $W$; for details, see \cite{tropp2010beyond}} for simplicity that $\O$ is a factor of $W$. On the support set $\setB_\ell$, the entries of the vector $\vd_{n\ell}$ are independent binary $\pm 1$ random variables, and are zeros on $\setB_\ell^c$. Moreover, assume that $\va_n^*$ are the rows of  $\mA$. With these notations in place, we can concisely write the $\ell$th sample in $t \in [0,1)$ in the $n$th branch as 
\begin{equation}\label{eq:meas}
y_{n}[\ell]: = \va_n^*\mX_0\vd_{n\ell}, ~ (n,\ell) \in [N]\times [\O]. 
\end{equation}
All this shows is that the samples taken by the ADC in the sampling architecture in Figure \ref{fig:Rand-dem} are linear measurements of an underlying low-rank matrix $\mX_0 \in \R^{M \times W}$, defined in \eqref{eq:X0def}. The rank of $\mX_0$ does not exceed $R$ --- recalling from Section \ref{sec:signal-model} that $R$ constitutes the number of linearly independent signals in the ensemble $\mX_c(t)$. Our objective is to recover $\mX_0$ from a a few linear measurements $y_n[\ell]$, which amounts to reconstructing $\mX_c(t)$ at a sub-Nyquist rate sampling. 
\subsection{Matrix recovery}
Define a linear map $\setA: \mX_0 \rightarrow \vy$, where $\vy$ is a length $N \O$ vector containing linear measurements $y_n[\ell]$ in \eqref{eq:meas} as its entries. Formally, 
\begin{equation}\label{eq:cA-def}
\setA(\mX_0) = \{ \va_n^*\mX_0\vd_{n\ell} : (n,\ell) \in [N]\times[\Omega]\}.
\end{equation}
We are mainly interested in the scenario when the linear map $\setA$ is under determined, that is, the number of measurements $N\O$ is much smaller than the number of unknowns $MW$. Therefore, to uniquely determine the true solution $\mX_0$, we solve a nuclear-norm penalized optimization program: 
\begin{align}
&\underset{\mX}{\text{argmin}}~\left\|\mX\right\|_*\label{eq:nuclearnorm_min}~\mbox{subject to}~\vy = \setA(\mX),
\end{align}
where $\left\|\mX\right\|_*$ is the nuclear norm (the sum of the singular values of $\mX$).  The nuclear norm penalty encourages the solution to be low rank \cite{fazel02ma}, and has concrete performance guarantees when the linear  map $\setA$ obeys certain properties \cite{recht10gu}.  In case of noisy measurements
\begin{equation}\label{eq:noisy-measurements}
\vy = \setA(\mX_0) + \vxi
\end{equation}
with bounded noise $\|\vxi\|_2 \leq \delta$, we solve the following quadratically constrained convex optimization program 
\begin{align}\label{eq:nuclearnorm_minnoisy}
&\underset{\mX}{\text{argmin}}~\left\|\mX\right\|_* ~\mbox{subject to}~\|\vy - \setA(\mX)\|_2 \leq \delta. 
\end{align}
This optimization program is also provably effective; see, for example, \cite{candes10ma,mohan10ne} for suitable $\setA$.

\subsubsection{Sampling theorem: Exact and stable recovery}\label{sec:Samp-thm}

The unknown matrix $\mX_0$ in \eqref{eq:X0def} is at most rank-$R$, and assume $\mX_0 = \mU\mSigma\mV^*$ is  its reduced form SVD, where $\mU\in \R^{M \times R}$, and $\mV \in \R^{ W \times R}$ are the  matrices of left and right singular vectors, respectively; and $\mSigma \in \R^{R \times R}$ is a diagonal matrix of singular values. Define coherences of $\mX_0$ as 
\begin{align}
\label{eq:coherences-random-demodulator}
&\mu^2(\mU) := \frac{M}{R}\max_{m \in [M]} \|\mU^*\ve_m\|_2^2, \quad \rho^2(\mV) : = \frac{\O}{R}\max_{\ell \in [\O]}\|\mV^*\mI_{\setB_\ell}\|_{\F}^2, \quad \text{and} \quad\notag\\
& \qquad\qquad \qquad \nu^2(\mU,\mV) := \frac{M\Omega}{R} \max_{m \in [M]} \max_{\ell \in [\Omega]} \|\ve^*_m\mU\mV^*\mI_{\setB_\ell}\|_2^2,
\end{align}
where $\mI_{\setB_\l}$ is a diagonal $W \times W$ matrix containing ones at the diagonal positions indexed by $\setB_\l$. We may sometime just work with notations $\mu^2, \rho^2$, and $\nu^2$ and drop the dependence on $\mU$, and $\mV$  when it is clear from the context.  It can easily be verified that $ 1 \leq \mu^2 \leq M/R$. In a similar manner, one can show that  $1 \leq \rho^2 \leq W/R$. To see this, notice that 
\begin{align*}
\rho^2 \geq \frac{\O}{R} \|\mV^*\mI_{\setB_\ell}\|_{\F}^2 \implies \O \rho^2 \geq \frac{\O}{R} \sum_{\ell=1}^\O \|\mV^*\mI_{\setB_\l} \|_\F^2 = \frac{\O}{R} \cdot R,
\end{align*}
that is, $\rho^2 \geq 1$. Using the fact that $\|\mV^*\mI_{\setB_\l}\|_{\F} \leq \|\mV\|^2 \|\mI_{\setB_\l}\|_\F^2 = W/\O$, the upper bound also follows. Finally,  similar techniques also show that $ 1 \leq \nu^2 \leq M\Omega/R$. One can attach meaning to the values of coherences in the context of sampling application under consideration. For example, the smallest value of $\rho^2$ is achieved the energy of $\mX_0$ is roughly equally distributed among the columns indexed by $\setB_1,\setB_2,\ldots,\setB_\O$. In the context of the sampling problem, this means that the energy in the signal ensemble $\mX_c(t)$ should be dispersed equally across time. Similarly, the coherence  $\mu^2$ quantifies the spread of signal energy across array elements, and $\nu^2$ measures the dispersion of energy across both the time and array elements. Let us define 
\begin{align}\label{eq:max-coherence-def}
\vartheta^2(\mU,\mV) : = \max(\mu^2(\mU),\rho^2(\mV), \nu^2(\mU,\mV)).
\end{align}
We are now ready to state our main result that dictates the minimum sampling rate $\O$ at which each ADCs needs to be operated to guarantee the reconstruction of signal ensemble $\mX_c(t)$. 
\begin{thm}\label{thm:Exact-rec}
	  Correlated signal ensemble $\mX_c(t)$ in \eqref{eq:lowrankensemble} can be acquired using the sampling architecture in Figure \ref{fig:Rand-dem} by operating each of the ADCs at a rate 
	\begin{equation}\label{eq:Omega-bound}
	\O \geq C_\beta\vartheta^2 \frac{R}{M} W\log^2 W,
	\end{equation}
	where $C_\beta$ is a universal constant that only depends on the fixed parameter $\beta \geq 1$. In addition, the ratio of the number of output to the input signals in AVMM must satisfy $N/M \geq C\log W$, where $C$ is a numerical constant. The exact signal reconstruction can be achieved with probability at least $1-\mathcal{O}(W^{1-\beta})$ by solving the nuclear-norm minimization program in  \eqref{eq:nuclearnorm_min}. 
\end{thm}

The result indicates that $M$ well spread out ($\vartheta^2 \approx 1$) correlated signals can be acquired by operating each ADC in Figure \ref{fig:Rand-dem} at a rate of $R/M \ll 1 $ times the Nyquist-rate $W$ to within $\log$ factors. Moreover, we also require the number $N$ of output signals at the AVMM to be larger than number $M$ of input signals by a $\log$ factor. However, we believe this is merely an artifact of the proof technique and our experiments also corroborate that successful recovery is always obtained for $\Omega$ satisfying \eqref{eq:Omega-bound} even when $N = M$ or $P=1$. Also note that the result in Theorem \ref{thm:Exact-rec} assumes without loss of generality that $W\geq M$. In the other case, when  $M \geq W$,  the sufficient sampling rate at each ACC can be obtained by replacing $W$ in \eqref{eq:Omega-bound} with $M$. 

Another important observation is that the sampling rate scales linearly with coherence $\vartheta^2$ implying that the sampling architecture is not as effective for correlated signals concentrated across time. To remedy this shortcoming, a preprocessing step using random filters, a mixing AVMM can be added to ensure signals are well-dispersed across time, and array elements. 

\subsubsection{Stable recovery}

In a realistic scenario, the measurements \eqref{eq:meas} are almost always contaminated with noise $\xi_n[\ell]$
\begin{align}\label{eq:noisy-mesaurements-entrywise}
y_n[\ell] = \va_n^*\mX_0\vd_{n\ell} + \xi_n[\ell], \quad (n,\ell) \in [N] \times [\O]
\end{align}
compactly expressed using the vector equality in \eqref{eq:noisy-measurements}. 
In the case, when the noise is bounded, i.e., $\sum_{n,\ell} |\xi_{n}[\ell]|^2 \leq \delta^2$, then following the template of the proof in \cite{candes10ma}, it can be shown that under the conditions of Theorem \ref{thm:Exact-rec}, the solution $\widehat{\mX}$ of \eqref{eq:nuclearnorm_minnoisy} obeys 
\begin{equation}\label{eq:st-rec-Y.Plan}
\|\widehat{\mX}-\mX_0\|_{\F} \leq C \sqrt{M} \delta
\end{equation}
with high probability; for more details on this, see a similar stability result in Theorem 2 in \cite{ahmed2012blind}. The upper bound above is suboptimal by a factor of $\sqrt{\min(W,M)}$. In theory, we can improve this suboptimal result and show the effectiveness of the nuclear norm penalty by analyzing a different estimator:
\begin{align}\label{eq:KLT-estimator}
\underset{\mX}{\text{argmin}}~\|\mX-\setA^*(\vy)\|_{\F}^2 + \lambda \|\mX\|_*.
\end{align}
This estimator was proposed in \cite{koltchinskii10nu}, and can be theoretically shown to obey essentially optimal stable recovery results. Using the fact that $\widehat{\mX}$ is the minimizer of \eqref{eq:KLT-estimator} if and only if  $\mathbf{0} \in \partial\big(\|\mX-\setA^*(\vy)\|_{\F}^2 + \lambda \|\mX\|_*\big)$,  one can show\cite{koltchinskii10nu} that the estimate $\widehat{\mX}$ is a simple soft thresholding of the singular values of the matrix  $\setA^*(\vy) \in \R^{M \times W}$
\begin{align}\label{eq:KLT-solution}
\widehat{\mX} = \sum_{r}\bigg(\sigma_r(\setA^*(\vy))-\frac{\lambda}{2}\bigg)_+\vu_r(\setA^*(\vy))\vv_r^*(\setA^*(\vy)),
\end{align}
where $x_+ = \max(x,0)$; in addition, $\vu_r(\setA^*(\vy))$, and $\vv_r(\setA^*(\vy))$ are the left and right singular vectors of the matrix $\setA^*(\vy)$, respectively; and $\sigma_r(\setA^*(\vy))$ is the corresponding singular value. 

In comparison to the estimator \eqref{eq:KLT-estimator}, the matrix Lasso in \eqref{eq:nuclearnorm_minnoisy} does not use the knowledge of the known distribution of $\setA$ and instead minimizes the empirical risk $\|\vy-\setA(\mX)\|_2 = \|\vy\|_2^2 - 2\<\vy,\setA(\mX)\> + \|\setA(\mX)\|_2^2$. Knowing the distribution, and the fact that $\E \setA^*\setA = \mathcal{I}$ holds in our case, we replace $\|\setA(\mX)\|_2^2$, by its expected value $\E \|\setA(\mX)\|_2^2 = \|\mX\|_{\F}^2$ in the empirical risk to obtain the estimator in \eqref{eq:KLT-estimator} by completing the square. Although the KLT estimator is easier to analyze, and will be shown to give optimal stable recovery results in theory, but it does not empirically perform as well as matrix Lasso in \eqref{eq:nuclearnorm_minnoisy}.

We quantify the strength of the noise vector $\vxi \in \R^{N\Omega}$ through its Orlicz-2 norm. For a random vector $\vz$, we define 
\[
\|\vz\|_{\psi_2} := \inf \{ u > 0: \E \big[ e^{\|\vz\|_2^2/u^2} \leq 2\big]\},
\]
and for scaler random variables, we simply take $\vz \in \R^1$ in the above definition. The Orlicz-2 norm is finite if the entries of $\vz$ are subgaussain, and is proportional to variance, if the entries are Gaussian. We assume that the entries of the noise vector $\vxi$ obey
\begin{align}\label{eq:noise-statistics}
\|\xi_n[\ell]\|_{\psi_2} \leq c\delta\tfrac{1}{N\Omega}, ~ \mbox{and}~ \|\vxi\|_{\psi_2} \leq c\delta.
\end{align}
 With this the following result is in order.
\begin{thm}\label{thm:stable-rec}
	 Fix $\beta \geq 1$. Given measurements $\vy$ of $\mX_0$ in \eqref{eq:noisy-mesaurements-entrywise} contaminated with additive noise $\vxi$ with statistics in \eqref{eq:noise-statistics}, the solution $\widehat{\mX}$ to \eqref{eq:KLT-estimator} obeys 
	\begin{equation}\label{eq:st-rec}
	\|\widehat{\mX} -\mX_0\|^2_{\F} \leq C\max(\delta,\delta^2)
	\end{equation}
	with probability at least $1-\setO(W^{-\beta})$ whenever $\Omega \geq C_\beta \nu^2 (R/M)W\log^{3/2}W$, where $C_\beta$ is a universal constant depending only on $\beta$. 
\end{thm}
Roughly speaking, the stable recovery theorem states that the nuclear norm penalized estimators are stable in the presence of additive measurement noise. The results in Theorem \ref{thm:stable-rec} are derived assuming that $\xi_{n}[\ell]$ are random with statistics in \eqref{eq:noise-statistics}. In contrast, the stable recovery results in the compressed sensing literature only assume that the noise is bounded, i.e., $\|\vxi\|_2 \leq \delta$, where $\vxi$ is the noise vector introduced earlier. Here, we give a brief comparison of Theorem \eqref{thm:stable-rec} with the stable recovery results in \cite{candes10ma,fazel2008compressed }. 

Compare the result in \eqref{eq:st-rec-Y.Plan} with \eqref{eq:st-rec}, it follows that our results improve upon the results in \cite{candes10ma} by a factor of $1/\sqrt{M}$. We will also compare our stable recovery results against the stable recovery results derived in \cite{fazel2008compressed}. The result roughly states if the linear operator $\setA$ satisfies the matrix RIP \cite{recht10gu}, and $\|\vxi\|_2 \leq \delta$, then the solution $\widehat{\mX}$ to \eqref{eq:nuclearnorm_minnoisy} obeys 
\begin{equation}\label{eq:st-rec-Fazel}
\|\widehat{\mX}-\mX_0\|_{\F} \leq C \delta.
\end{equation}
The above result is essentially optimal stable recovery result. In comparison to \eqref{eq:st-rec-Fazel}, the result in \eqref{eq:st-rec} is also optimal, however, we prove it for a different estimator and under a statistical bound on the noise term  $\|\vxi\|_{\psi_2} \leq \delta$. In addition, we also donot require the matrix RIP for $\setA$, which is generally required to prove optimal results of the form of \eqref{eq:st-rec-Fazel}.

\subsection{Architecture 3: Uniform sampling architecture}
\label{sec:uniform-archs}
The discussion in Section \ref{sec:arch-randsample}, and  the result in Theorem \ref{thm:Exact-rec}  suggest that sampling rate sufficient for exact recovery using the architecture 1 and 2 scales linearly with the coherence parameter $\mu^2$, and $\vartheta^2$, respectively.  As discussed earlier, the coherence parameters quantify the energy dispersion in the correlated signal ensemble $\mX_c(t)$ across time and array elements. Ideally, we would like the sampling rate $\Omega$ to only scale with factor of $W$, and be independent of signal characteristics (coherences). To achieve this, signals are preprocessed with random filters and AVMM so that signal energy is evenly distributed across time, and array elements. The resultant signals are the randomly modulated, low-pass filtered, and sampled uniformly at a rate $\Omega$. The modified sampling architectures are depicted in Figure \ref{fig:Uniform-Rand-Samp} and \ref{fig:Uniform-RD}. 

\begin{figure}[ht]
	\begin{center}
		\includegraphics[trim= 1cm 12.5cm 2.5cm 0cm,scale = 0.68]{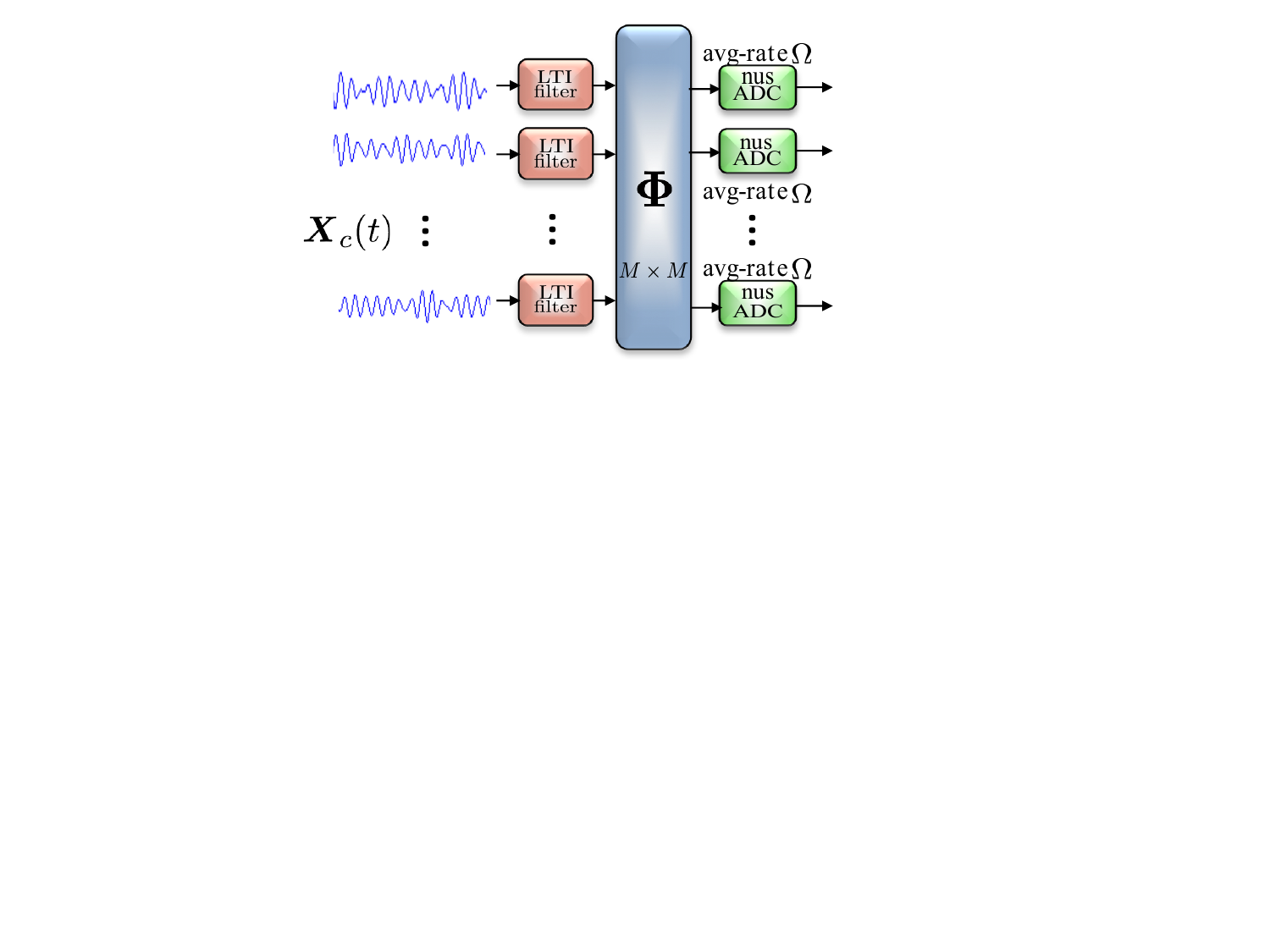}
	\end{center}
	\caption{\small\sl Architecture 3: Analog vector-matrix multiplier (AVMM) takes random linear combinations of $M$ input signals to produce $M$ output signals. This equalizes energy across channels. The random LTI filters convolve the signals with a diverse waveform that results in dispersion of signals across time. The resultant signals are then sampled, at locations selected randomly on a uniform grid, at an average rate $\O$, using a non-uniform sampling (nus) ADC in each channel.}
	\label{fig:Uniform-Rand-Samp}
\end{figure}

\begin{figure}[ht]
	\begin{center}
		\includegraphics[trim= 4cm 9.5cm 2.5cm 0cm,scale = 0.72]{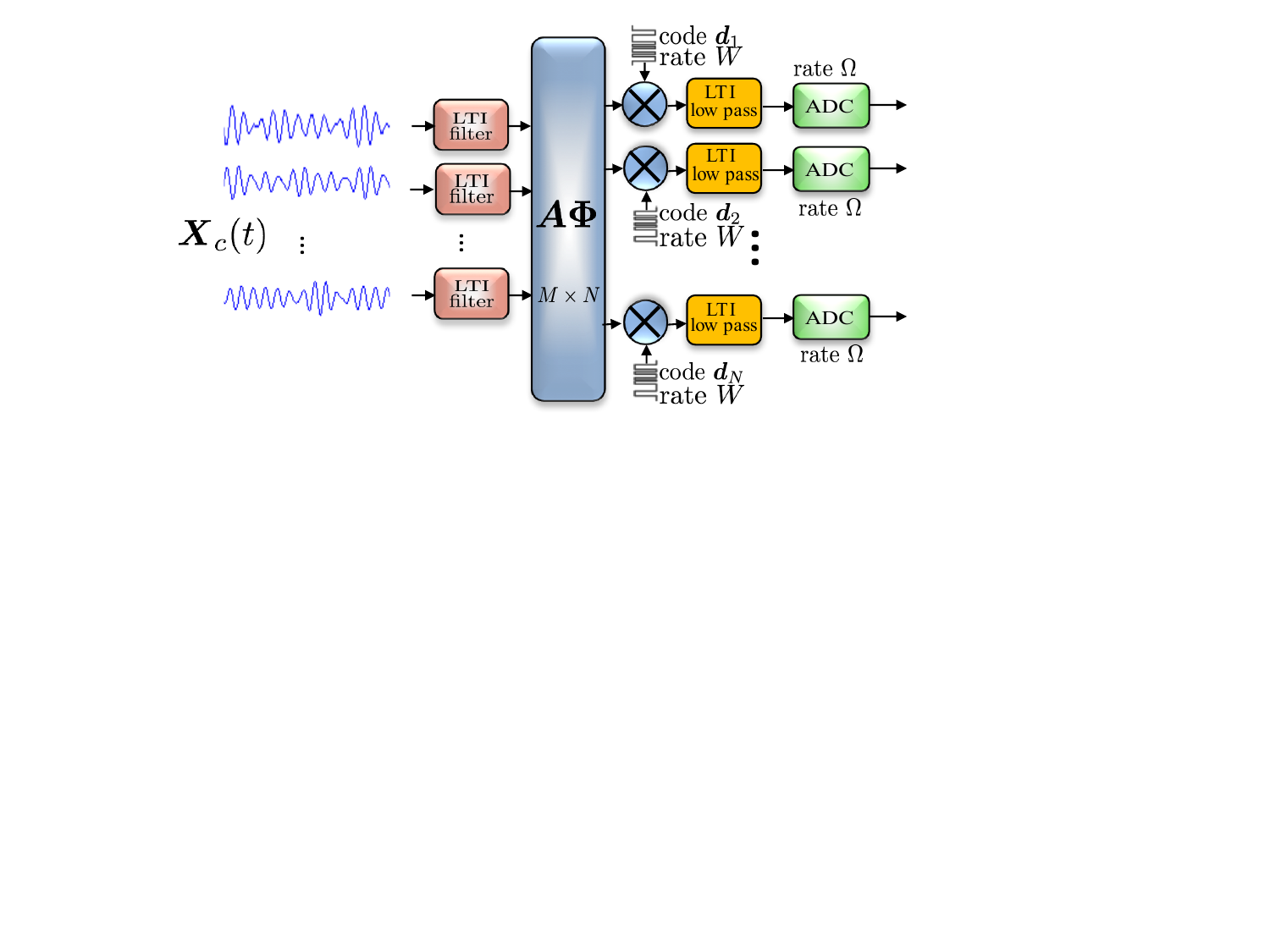}
	\end{center}
	\caption{\small\sl Architecture 4: Random LTI filters disperse each of the $M$ signal across time. An Analog vector-matrix multiplier (AVMM) takes random linear combinations of $M$ input signals to produce $N$ output signals. This amounts to choosing $[\mPhi \otimes \tfrac{1}{\sqrt{P}}\mathbf{1}_P] = \mA\mPhi$ as the mixing matrix, where $\mA$ is as in \eqref{eq:Mixing-matrix-Architecture-2}, and $\mPhi$ is an $M \times M$ dense, random-orthogonal matrix. The well dispersed signals across time and array elements are now randomly modulated, low-pass filtered, and sampled at rate $\Omega$.}
	\label{fig:Uniform-RD}
\end{figure}

Recall that random LTI filters are all pass, and convolve the signals with a diverse impulse response $h_c(t)$, which disperses signal energy over time w.h.p. (see Lemma \ref{lem:coherence}). We will use the same random LTI filter $h_c(t)$ in each channel. The action of the random convolution\cite{romberg2009compressive} of $h_c(t)$ with each signal in the ensemble can be modeled by the right multiplication of a circulant random orthogonal matrix $\mH \in \R^{W \times W}$  with the underlying low-rank $\mX_0$ in \eqref{eq:X0def}. The AVMM takes the random linear combination of $M$ input signals to produce $N$ output signals, which then equalizes w.h.p., the signal energy across array elements regardless of the initial energy distribution.  

As discussed earlier that the action of AVMM is left multiplication of $\mA$ with the low-rank ensemble $\mX_c(t)$. In Architecture 3, the AVMM is $M$-in-$M$-out, and to ensure mixing of signals across array elements, we take the mixing matrix to be an $M \times M$ random orthonormal matrix $\mPhi$. Thus, the samples collected in Architecture 3 are not the subset of the entries of $\mX$, defined in \eqref{eq:Cdef} but of 
\begin{align}\label{eq:Xtilde-def}
\tilde{\mX} = \mPhi\mX\mH.
\end{align} 

In Architecture 4, the AVMM is modified from $\mA= [\mI_M \otimes \tfrac{1}{\sqrt{P}}\mathbf{1}_P] \in \R^{N \times M}$ in \eqref{eq:Mixing-matrix-Architecture-2} to 
\begin{align}\label{eq:Mixing-matrix-Architecture-4}
[\mPhi \otimes \tfrac{1}{\sqrt{P}}\mathbf{1}_P] = \mA\mPhi,
\end{align}
 where $\mPhi \in \R^{M\times M}$ is a random orthonormal matrix. This implies that unlike the samples $\vy = \setA(\mX_0)$ , the Architecture 4  collects $\vy = \setA(\tilde{\mX}_0)$, where 
 \begin{align}\label{eq:X0tilde-def}
 \tilde{\mX}_0 = \mPhi\mX_0\mH,
 \end{align}
 and $\setA$ is same as defined in \eqref{eq:cA-def}.

Both \eqref{eq:Xtilde-def}, and \eqref{eq:X0tilde-def} multiply the matrix of samples $\mX$, and $\mX_0$ with random orthogonal matrices on the left, and right. This multiplication results in modifying the singular vectors  $\mU \in \R^{M\times R}$, and $\mV \in \R^{W \times R}$ of the matrix of samples (either $\mX$, or $\mX_0$) to $\widetilde{\mU} = \mPhi\mU \in \R^{M \times R}$ and $\widetilde{\mV} = \mH\mV \in \R^{W \times R}$. Note that matrix $\tilde{\mX}$, and $\tilde{\mX}_0$ are an isometry with, and have the same rank as $\mX$, and $\mX_0$, respectively. The new left and right singular vectors $\tilde{\mU}$ and $\widetilde{\mV}$ of $\tilde{\mX}$, or $\tilde{\mX}_0$ are in some sense random orthogonal matrices and hence, incoherent w.h.p. 

The following lemma shows the incoherence of matrix $\tilde{\mU}$, and $\tilde{\mV}$. 
\begin{lem}\label{lem:coherence}
	Fix matrices $\mU \in \mathbb{R}^{M \times R}$ and $\mV \in \mathbb{R}^{W \times R}$ of the left and right singular vectors, respectively. Create random orthonormal matrices  $\mPhi \in \R^{M\times M}$ and $\mH \in \R^{W \times W}$. Let $\widetilde{\mU} = \mPhi\mU$, and $\widetilde{\mV} = \mH\mV$, and the coherences $\kappa^2(\widetilde{\mU},\widetilde{\mV})$, and $\vartheta^2(\widetilde{\mU},\widetilde{\mV})$ be as defined in \eqref{eq:coherence}, \eqref{eq:max-coherence-def}. Then for a $\beta \geq 1$, the  following conclusions
	 \begin{equation}\label{eq:kappa-bound}
	 \kappa^2(\widetilde{\mU},\widetilde{\mV}) \leq C_\beta\log W\max{(1,\frac{1}{R}\log M)},
	 \end{equation}
	and 
	 \begin{equation}\label{eq:vartheta-bound}
	 \vartheta^2(\widetilde{\mU},\widetilde{\mV}) \leq C_\beta\log W\max{(1,\frac{1}{R}\log M)}
	 \end{equation}
	  each holding with probability exceeding $1-\mathcal{O}(W^{-\beta})$.
  \end{lem}
Proof of Lemma~\ref{lem:coherence} is presented in Section ~\ref{sec:coherence-proof}. 
In light of \eqref{eq:Xtilde-def}, it is clear that samples collected using Architecture 3 are randomly selected subset of the entries of $\tilde{\mX}$, and using the result in \eqref{eq:kappa-bound}, and \eqref{eq:sampling-rate-Arch1}, the sufficient sampling rate for the successful reconstruction of signals becomes 
\[
\Omega \geq C\frac{1}{M}\max(R,\log M)W\log^3 W.
\]
In light of \eqref{eq:X0tilde-def}, it is clear that the samples collected using Architecture 4 are the same as in \eqref{eq:meas} with $\mX_0$ replaced by $\widetilde{\mX}_0$. 
With this observation, combining the bound on $\nu^2$ in Lemma ~\ref{lem:coherence} with Theorem \ref{thm:Exact-rec} immediately provides with the following corollary that dictates the sampling rate sufficient for exact recovery using the uniform sampling architecture in Figure \ref{fig:Uniform-RD}.
\begin{cor}\label{cor:Exact-rec}
	Fix $\beta \geq 1$. The correlated ensemble $\mX_c(t)$ in \eqref{eq:lowrankensemble} can be exactly reconstructed using the optimization program in \eqref{eq:nuclearnorm_min}  with probability at least $1-\mathcal{O}(W^{-\beta})$ from the samples collected by each of the ADC in Figure \ref{fig:Uniform-RD} at a sub-Nyquist rate:
	\begin{equation}\label{eq:Omega-bound2}
	\O \geq C_\beta \frac{1}{M}\max(R,\log M)W\log^4W,
	\end{equation}
	where $C_\beta$ is a universal constant depending only on $\beta$. In addition, the ratio of the number of output to the input signals in AVMM must satisfy $N/M \geq C\log W$ for a sufficiently large constant $C$. 
\end{cor}

%
%
\section{Numerical Experiments}\label{sec:Exps}
In this section, we study the performance of the proposed sampling architectures with some numerical experiments. We mainly show that a correlated ensemble $\mX_c(t)$ in \eqref{eq:lowrankensemble} can be acquired by only paying a small factor on top of the optimal sampling rate of roughly $RW$. We then show the distributed nature of the sampling architecture in Figure \ref{fig:Rand-dem} by showing that increasing the number of ADCs (or the array elements), the sampling burden on each of the ADC can be reduced as the net sampling rate is shared evenly among the ADCs. Finally, we show that the reconstruction algorithm is robust to additive noise.

\subsection{Sampling performance}
In all of the experiments in this section, we generate the unknown rank-$R$ matrix $\mX_0$ synthetically  by multiplying tall $M \times R$, and fat $R \times W$ Gaussian matrices. Our objective is to recover a batch of $M  = 100$ signals, with $W = 1024$ samples taken in a given window of time using the sampling architecture in Figure \ref{fig:Rand-dem}. We take $P = 1$ or $N=M$ for all these experiments, and the results hint that $P>1$ or $N>M$ in Theorem \ref{thm:Exact-rec} is only a technical requirement due to the proof technique.  We will use the following parameters to evaluate the performance of the sampling architecture:
\[
\mbox{Oversampling factor}: \eta = \frac{M\O}{R(W+M-R)},
\]
where the oversampling factor is the ratio between the cumulative sampling rate, and the inherent unknowns in $\mX_0$. The successful reconstruction is declared when the relative error obeys 
\[
\mbox{Relative error}: = \frac{\|\widehat{\mX}-\mX_0\|_{\F}}{\|\mX_0\|_{\F}} \leq 10^{-2}.
\]

The first experiment shows a graph, in Figure \ref{fig:RDLvsR}, between $\eta$, and $R$. Each point, marked with a black dot, represents the minimum sampling rate required for the successful reconstruction of an $\mX_c(t)$ with a specific $R$. The probability of success for each point is $0.99$, and is computed empirically by averaging over 100 independent iterations. The blue line shows the least-squares fit of the black dots. It is clear from the plot that the for reasonably large values of $R$, the sampling rate is within a small constant of the optimal rate $R(W+M-R)$. 

In context of the application, and under the narrow-band assumption described in Section \ref{sec:APP},  the graph in Figure \ref{fig:RDLvsM} shows that for a fixed number of sources $R = 10$, the sufficient sampling rate $\O$ is inversely proportional to number $M$ of the receiver array elements. Each black dot represents the minimum sampling rate required for the successful reconstruction with probability $0.99$. The blue line is the least-squares fit of these marked points. In other words, Figure \ref{fig:RDLvsM} illustrates the relationship between the number $M$ of ADCs and the sampling rate $\O$ for a fixed number of sources $R = 10$. Importantly, an increase in the receiver array elements reduces the sampling burden on each of the ADCs. 

\begin{figure}[ht]
	\centering
	\subfigure[]{
		\includegraphics[trim=2.5cm 7.5cm 2.5cm 7.5cm,scale = 0.45]{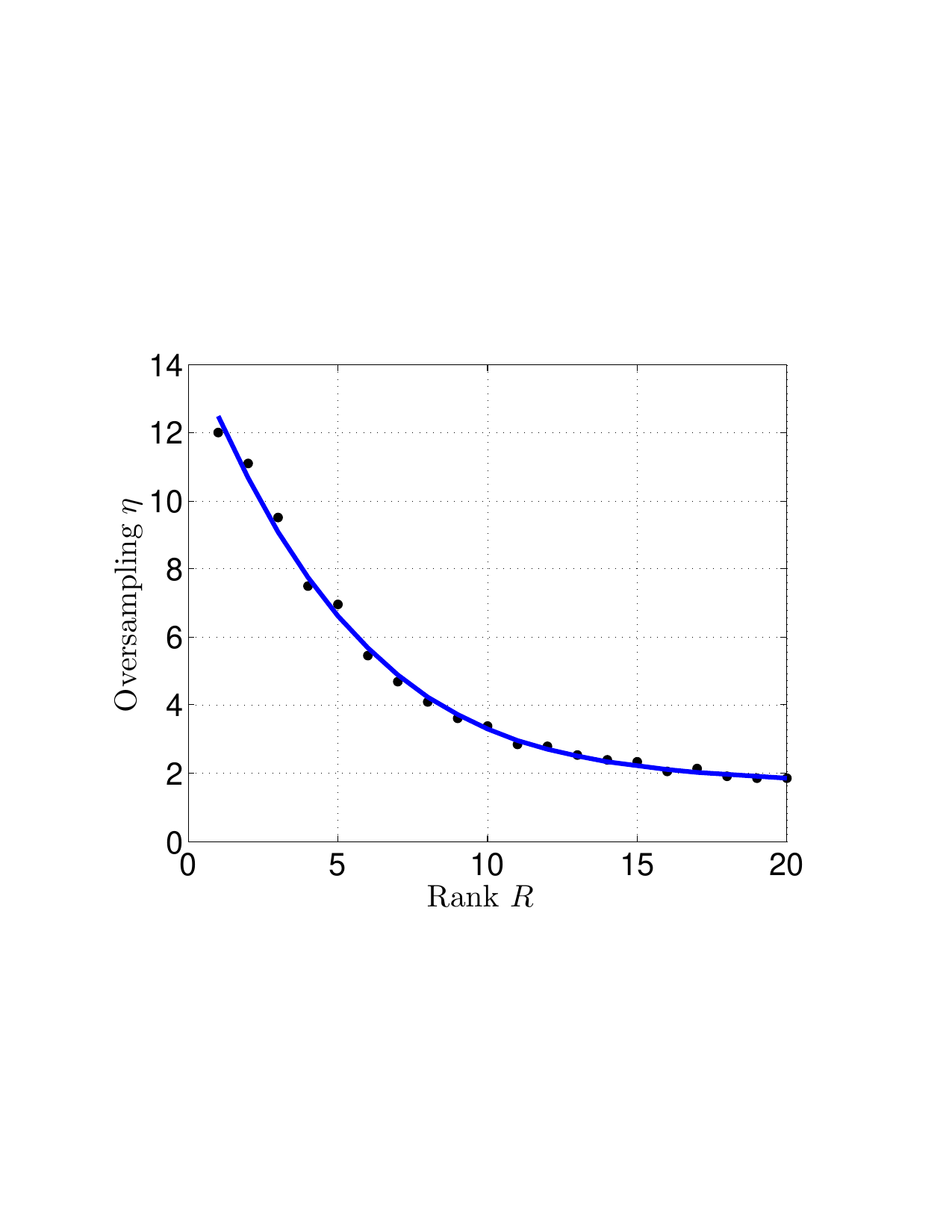}
		\label{fig:RDLvsR}}
	\subfigure[]{
		\includegraphics[trim=2.5cm 7.5cm 2.5cm 7.5cm,scale = 0.45]{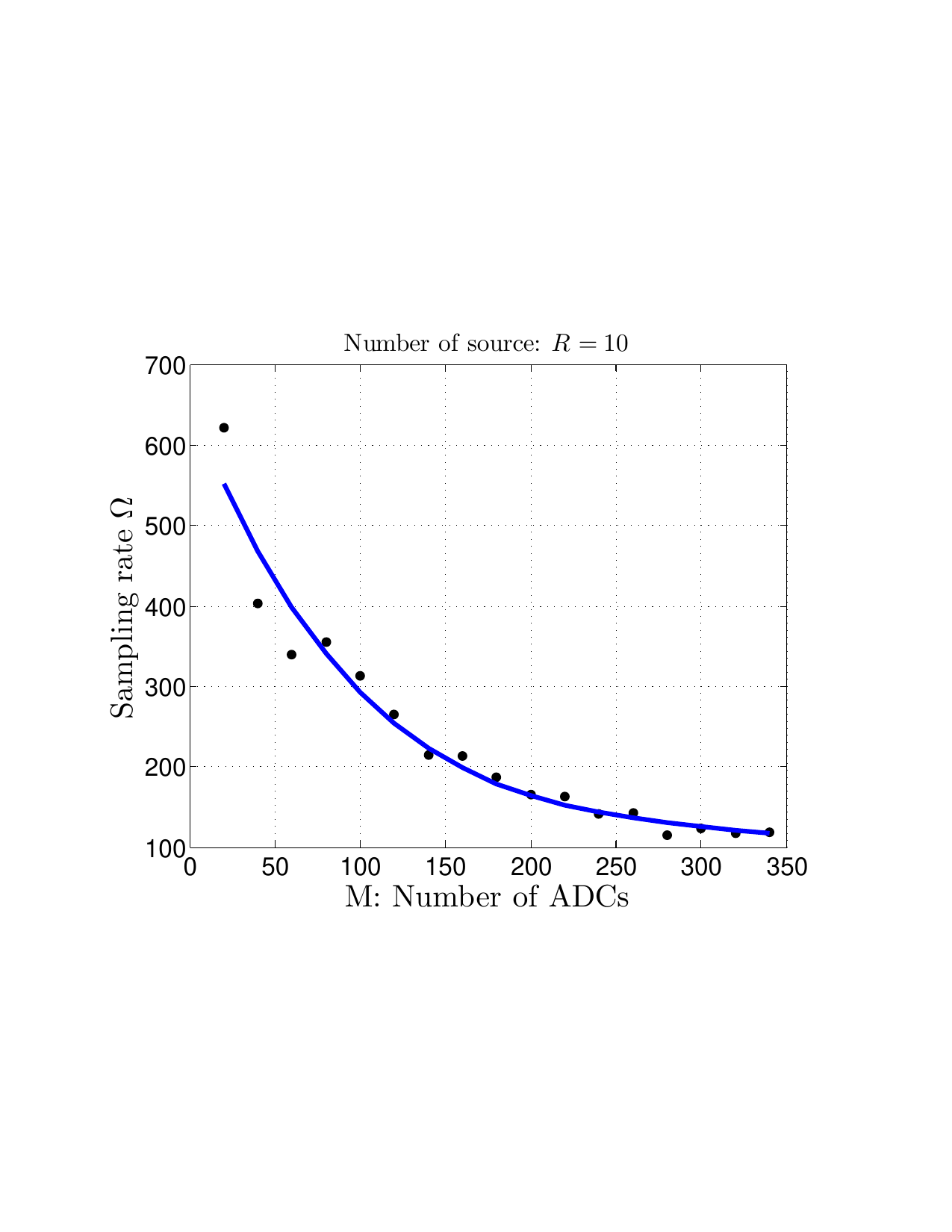}
		\label{fig:RDLvsM}}
	\caption{\small \sl Performance of Sampling Architecture 2. In these experiments, we take an ensemble of $100$ signals, each bandlimited to $512$Hz. The probability of success is computed over 100 iterations. (a) Oversampling factor $\eta$ as a function of the number $R$ of underlying independent signals in $\mX_c(t)$. The blue line is the least-squares fit of the data points. (b) Sampling rate $\O$ versus the number $M$ of recieving antennas. The blue line is the least-sqaures fit of the data points. } 
	\label{fig:phasetransitions}
\end{figure}

\subsection{Stable recovery}
In the second set of experiments, we study the performance of the the recovery algorithm when the measurements are contaminated with additive measurement noise $\vxi$ as in \eqref{eq:noisy-measurements}. We generate noise using the standard Gaussian model: $\vxi \sim \mathcal{N}(0,\sigma^2\mI)$. We select $\delta \leq \sigma(L+\sqrt{L})^{1/2}$; a natural choice as the condition $\|\vxi\|_2 \leq \delta$ holds with high probability.  For the experiments in Figure \ref{fig:StableRec}, we solve the optimization program in \eqref{eq:nuclearnorm_minnoisy}. The plot in Figure \ref{fig:StableRec1} shows the relationship between the signal-to-noise ratio (SNR):
\[
\mbox{SNR(dB)} = 10\log\left(\frac{\|\mX_0\|_{\F}^2}{\|\vxi\|_2^2}\right),
\]
and the realtive error(dB):
\[
\mbox{Relative error (dB)} = 10\log\left(\frac{\|\widehat{\mX}-\mX_0\|_{\F}^2}{\|\mX_0\|_{\F}^2}\right)
\]
for a fixed oversampling factor $\eta = 3.5$. The result shows that the relative error degrades gracefully with decreasing SNR. In the Figure \ref{fig:StableRec2}, the plot depicts relative error as a function of the oversampling factor for a fixed SNR = 40dB. The relative error decrease with increasing sampling rate. 
\begin{figure}[ht]
	\centering
	\subfigure[]{
		\includegraphics[trim=2.5cm 7.5cm 2.5cm 7.5cm,scale = 0.4]{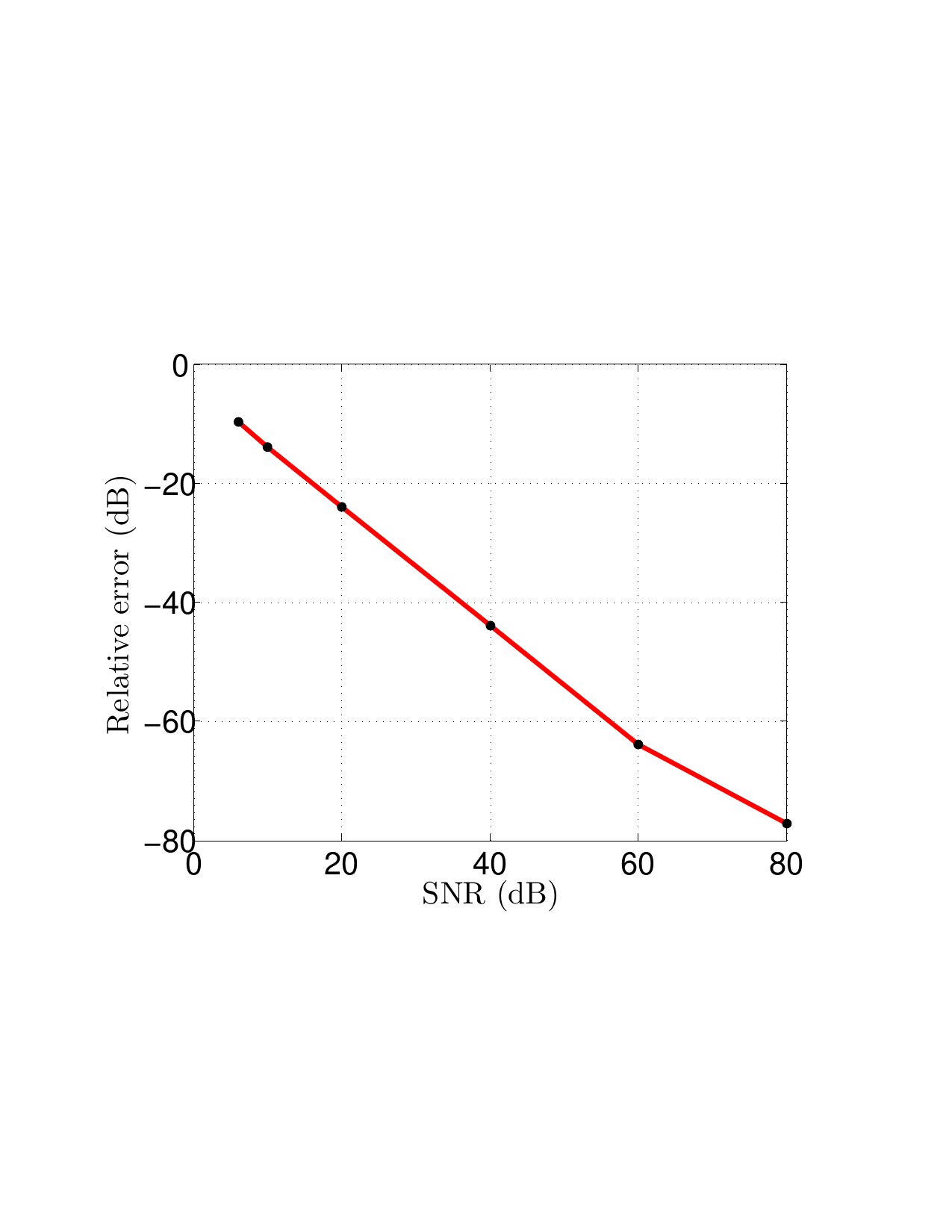}
		\label{fig:StableRec1}}
	\subfigure[]{
		\includegraphics[trim=2.5cm 7.5cm 2.5cm 7.5cm,scale = 0.4]{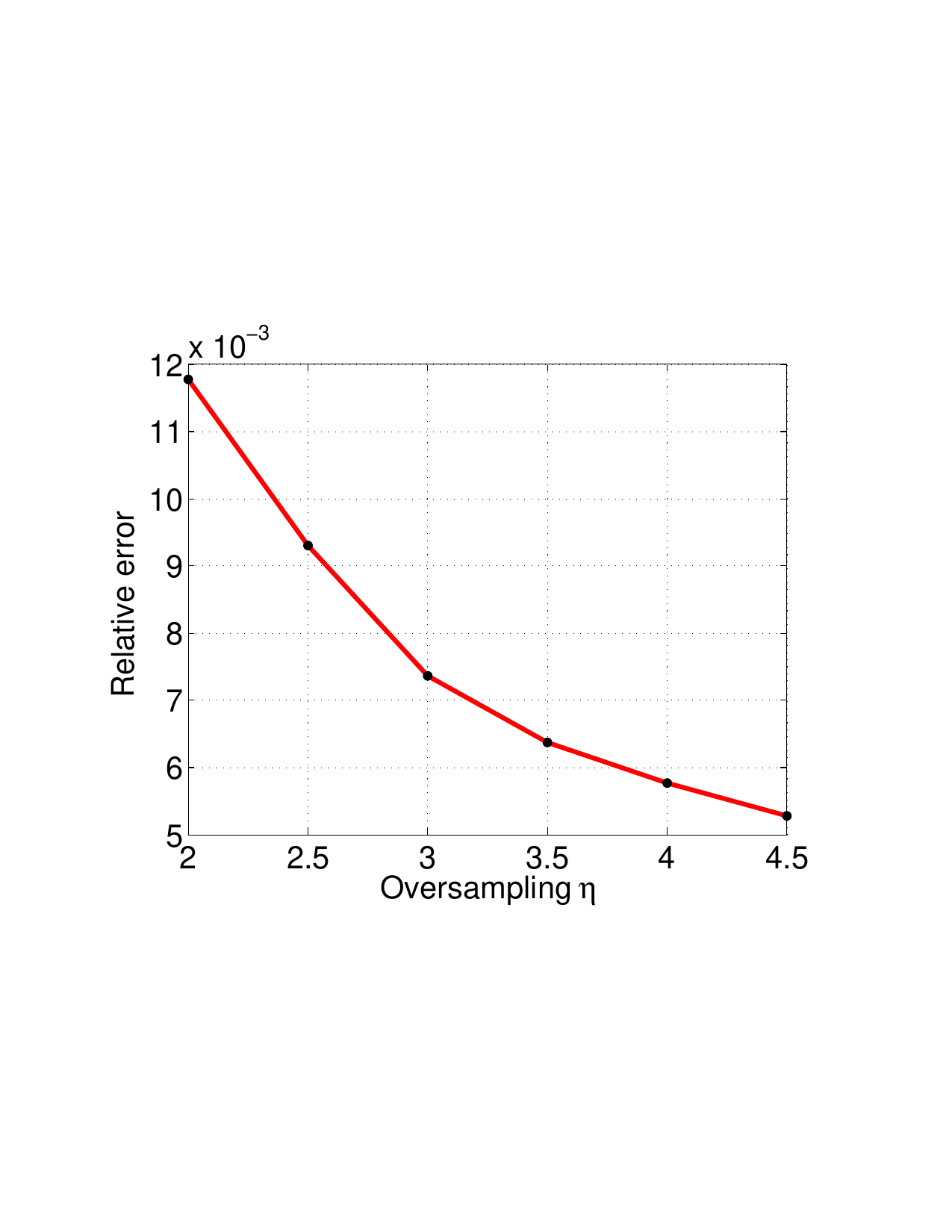}
		\label{fig:StableRec2}}
	\caption{\small \sl Recovery using matix Lasso in the presence of noise.  The input ensemble to the simulated random demodulator consists of $100$ signals, each bandlimited to $512$Hz with number $R = 15$ of latent independent signals.(a) The SNR in dB versus the relative error in dB.  The oversampling factor $\eta = 3.5$. (b) Relative error as a function of the sampling rate. The SNR is fixed at 40dB.}
	\label{fig:StableRec}
\end{figure}

\section{Proof of Lemma \ref{lem:coherence}}\label{sec:coherence-proof}
We start with the proof of Lemma \ref{lem:coherence}
\begin{proof} 
	Recall that $\widetilde{\mU} = \mA\mU$, where we are taking $\mA \in \R^{M \times M}$ to be a random orthogonal matrix, and $\widetilde{\mV} = \mH\mV$, where $\mH$ was defined in \eqref{eq:filter1}. Let $\ve_m$ denote the standard basis vectors in $\R^M$. We begin the proof by noting a standard result; see, \cite{laurent2000adaptive}, that reads
	\begin{equation}\label{eq:Up}
	\underset{m \in [M]}{\max} \|\widetilde{\mU}^*\ve_m\|_2^2 \leq \frac{1}{M}C\beta\max{(R,\log M)},
	\end{equation}
	with probability at least $1-\setO(M^{-\beta})$. Before proving the lemma, we prove an intermediate result:
	\begin{equation}\label{eq:Vp}
	\underset{k \in [W]}{\max} \|\widetilde{\mV}^*\ve_{k}\|_2^2 \leq \frac{1}{W}C\beta\max{(R,\log W)},
	\end{equation}
	where $\ve_{k}$ are standard basis vectors in $\R^W$. 
	Assuming $W$ is even; it will be clear how to extend the argument to $W$ odd.  We can write $\mH = \mW\mQ^*$, where
	\begin{equation}
	\label{eq:filterdecomp1}
	Q[n,\omega] = \begin{cases}
	\frac{1}{\sqrt{W}}&\omega = 0\\
	\frac{2}{\sqrt{W}}\cos \left(\frac{2\pi\omega n}{W}\right)&\omega = [1, \frac{W}{2}-1]\\
	\frac{1}{\sqrt{W}}(-1)^{k-1}&\omega = \frac{W}{2}\\
	\frac{2}{\sqrt{W}}\sin \left(\frac{2\pi\omega n}{W}\right)&\omega = [\frac{W}{2}+1, W-1]\\
	\end{cases} ~~
	W[n,\omega] = \begin{cases}
	\frac{z_0}{\sqrt{W}},&\omega = 0 \\
	\frac{2}{\sqrt{W}}\cos \left(\frac{2\pi\omega n}{W}+\theta_\omega\right)&\omega = [1, \frac{W}{2}-1]\\
	\frac{z_{W/2}}{\sqrt{W}}(-1)^{k-1}, & \omega = \frac{W}{2}\\
	\frac{2}{\sqrt{W}}\sin \left(\frac{2\pi\omega n}{W}+\theta_\omega\right), & \omega = [\frac{W}{2}+1, W-1]
	\end{cases}. 
	\end{equation}
	and $z_0,z_{W/2} = \pm 1$ with equal probability and $\theta_\omega$ for $\omega = 1, \ldots, W/2-1$ are uniform on $[0,2\pi]$ and all $W/2+1$ of these random variables are independent. 
	
	It is a fact that in \eqref{eq:filterdecomp1} for fixed $a$ and $\theta \sim \mbox{Uniform}([0,2\pi])$, the random variables $\mbox{sign}(\cos(a+\theta))$ and $\mbox{sign}(\sin(a+\theta))$ are independent of one another. Thus $\mH$ has the same probability distribution as $\mW\mZ\mQ^*$, where $\mZ = \mbox{diag}(\vz)$ and the entries of $\vz$ are iid $\pm 1$ random variables. In light of this, we will replace $\mH$ with $\mW\mZ\mQ^*$. For a fixed $k$, we can write
	\begin{align*}
	\widetilde{\mV}^*\ve_k &= \mV^*\mH^*\ve_k = \widetilde{\mQ}^*\mZ\vw_k = \sum_{\omega = 1}^W z[\omega]w_k[\omega]\vq_\omega 
	\end{align*} 
	where $\widetilde{\mQ} = \mQ^*\mV$ and $\vw_k = \mW^*\ve_k$ and $\vq_\omega$ is the $\omega$th column of  $\widetilde{\mQ}^*$. We will apply the following concentration inequality.
	\begin{thm}\cite{ledoux2001concentration} Let $\veta \in \R^n$ be a vector whose entries $\eta[i]$ are independent random variables with $\left|\eta[i]\right|< 1$, and let $\mS$ be a fixed $m \times n$ matrix. Then for every $t \geq 0$
		\[
		\PP\big(\left\|\mS\boldsymbol{\eta}\right\|_2 \geq \E\left\|\mS\veta\right\|_2+t\big)\leq 2e^{-t^2/16\left\|\mS\right\|^2}, ~~\text{where} ~~	\E\left\|\mS\veta\right\|_2 \leq \left\|\mS\right\|_{\F}.
		\]
	\end{thm}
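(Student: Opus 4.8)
The plan is to derive this as an instance of Talagrand's concentration inequality for \emph{convex} Lipschitz functions of independent bounded random variables. First I would record the two structural properties of $f(\boldsymbol{\eta}) := \|\vm{S}\boldsymbol{\eta}\|_2$ that drive the whole argument. It is convex, being the composition of the linear map $\boldsymbol{\eta}\mapsto \vm{S}\boldsymbol{\eta}$ with the Euclidean norm; and it is Lipschitz with constant exactly $\|\vm{S}\|$ (the operator norm), since $|f(\boldsymbol{\eta})-f(\boldsymbol{\eta}')| \leq \|\vm{S}(\boldsymbol{\eta}-\boldsymbol{\eta}')\|_2 \leq \|\vm{S}\|\,\|\boldsymbol{\eta}-\boldsymbol{\eta}'\|_2$ by the triangle inequality. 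By homogeneity of the claimed bound in $\vm{S}$, I may rescale and assume $\|\vm{S}\|=1$, so that $f$ is convex and $1$-Lipschitz and the target exponent becomes $-t^2/16$.

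The engine of the proof is Talagrand's convex distance inequality: for a random vector $\boldsymbol{\eta}$ with independent coordinates on a product space and any set $A$, $\E\exp\!\left(d_T(\boldsymbol{\eta},A)^2/4\right) \leq 1/\P{\boldsymbol{\eta}\in A}$, where $d_T$ is the convex (Talagrand) distance, and for a \emph{convex} set $A$ this convex distance dominates the ordinary Euclidean distance to $A$ scaled by the coordinate ranges. I would apply this to the sublevel set $A=\{f\leq \mathbb{M}f\}$, which is convex precisely because $f$ is convex and carries probability at least $1/2$ by the definition of the median $\mathbb{M}f$. Since each $\eta(i)$ lives in an interval of length at most $2$, chaining these facts produces the two-sided deviation bound $\P{|f-\mathbb{M}f|\geq t} \leq 4\exp(-t^2/16)$; the constant $16=4\cdot 2^2$ is where the factor $4$ in Talagrand's inequality meets the length-$2$ range of the $\eta(i)$, and I would track it carefully.

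It then remains to pass from the median $\mathbb{M}f$ to the mean $\E f$ and to bound $\E f$. For the first, I would integrate the tail estimate, $|\E f-\mathbb{M}f|\leq \int_0^\infty \P{|f-\mathbb{M}f|>t}\,dt$, a universal constant times the Lipschitz constant, and absorb this fixed shift into the one-sided bound (using that $2\exp(-t^2/16)\geq 1$ renders the claim trivial for small $t$, while for large $t$ the shift is negligible), arriving at the stated $\P{f\geq \E f+t}\leq 2\exp(-t^2/16)$. For the second, Jensen gives $\E f \leq (\E\|\vm{S}\boldsymbol{\eta}\|_2^2)^{1/2}$, and expanding the quadratic form with independence yields $\E\|\vm{S}\boldsymbol{\eta}\|_2^2 = \sum_i \|\vm{S}\e_i\|_2^2\,\E\,\eta(i)^2 \leq \sum_i \|\vm{S}\e_i\|_2^2 = \|\vm{S}\|_{\F}^2$, where $|\eta(i)|<1$ controls the diagonal terms and the off-diagonal terms drop under the mean-zero (Rademacher) normalization relevant to the application of the lemma.

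The step I expect to be the main obstacle is securing the \emph{operator} norm $\|\vm{S}\|^2$ in the denominator rather than the Frobenius norm. A direct bounded-differences (McDiarmid) estimate, based on the observation that altering one coordinate changes $f$ by at most $2\|\vm{S}\e_i\|_2$, only delivers the weaker, dimension-dependent exponent $-t^2/(2\|\vm{S}\|_{\F}^2)$. Attaining the dimension-free, Lipschitz-constant scaling is exactly what forces the use of the convexity of $f$ together with Talagrand's convex-distance machinery instead of elementary martingale arguments, and the careful bookkeeping of the numerical constant through the median-to-mean conversion is the remaining delicate point.
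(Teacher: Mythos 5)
Two remarks before the substance: the paper never proves this statement at all --- it is quoted as a known result from the cited monograph of Ledoux and used as a black box inside the proof of Lemma~\ref{lem:RD-coherence} --- so your attempt can only be measured against the standard proof in that reference. Your overall route \emph{is} that proof: Talagrand's convex distance inequality applied to sublevel sets of the convex, $\|\vm{S}\|$-Lipschitz function $f(\boldsymbol{\eta})=\|\vm{S}\boldsymbol{\eta}\|_2$, with $16=4\cdot 2^2$ accounted for exactly as you say. Your supporting steps are also sound: convexity and the operator-norm Lipschitz constant are correct, the observation that a bounded-differences argument can only produce the Frobenius-scaled exponent is the right motivation for invoking convexity, and your remark that $\E\|\vm{S}\boldsymbol{\eta}\|_2\leq\|\vm{S}\|_{\F}$ requires mean-zero coordinates (true for the Rademacher vector to which the paper applies the lemma, though absent from the theorem statement) is a legitimate catch.

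The genuine gap is the median-to-mean conversion, and it fails in the opposite direction from the one you describe. Talagrand's machinery gives $\P{f\geq m_f+t}\leq 2e^{-t^2/16}$ around the median $m_f$ (after rescaling $\|\vm{S}\|=1$). If $\E f\geq m_f$ you are done at once; the problematic case is $c:=m_f-\E f>0$, where the shift argument gives $\P{f\geq \E f+t}\leq 2e^{-(t-c)^2/16}=2e^{-t^2/16}\,e^{(2tc-c^2)/16}$, and the correction factor $e^{2tc/16}$ \emph{diverges} as $t\to\infty$. A fixed shift inside a sub-Gaussian tail is never ``negligible for large $t$''; large $t$ is precisely where it costs the most. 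Meanwhile the trivial regime where $2e^{-t^2/16}\geq 1$ extends only to $t\leq 4\sqrt{\ln 2}\approx 3.3$, whereas integrating the median tail bounds the gap $c$ only by a constant of order $5$ or $6$, so the two regimes do not meet and the stated constants $(2,16)$ around the \emph{mean} cannot be recovered by this bookkeeping. The repair is either to settle for the median form (prefactor $4$, which is how Ledoux states it, and which would serve the paper's application equally well, since the lemma is used with ample slack in constants), or to obtain the mean form directly by the entropy method: for convex $1$-Lipschitz functions of independent $[0,1]$-valued variables one has the one-sided bound $\P{f\geq \E f+t}\leq e^{-t^2/2}$ (Boucheron--Lugosi--Massart), which after rescaling the coordinates to $[-1,1]$ yields $\P{f\geq \E f+t}\leq e^{-t^2/8\|\vm{S}\|^2}$, strictly stronger than the claimed inequality.
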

	We can apply the above theorem with $\mS = \widetilde{\mQ}^*\mW_k$, where $\mW_k = \mbox{diag}(\vw_k)$, and $\boldsymbol{\eta} = \vz$. In this case, we have 
	\begin{align*}
	\big\|\widetilde{\mQ}^*\mW_k\big\|_{\F}^2 & = \sum_{\omega = 1}^W \left|w_k[\omega]\right|^2\left\|\vq_\omega\right\|_2^2 \leq \frac{2}{W}\sum_{\omega = 1}^W\left\|\vq_\omega\right\|_2^2\leq \frac{2R}{W},
	\end{align*}
	and $\|\widetilde{\mQ}^*\mW_k\| \leq \sqrt{\frac{2}{W}}\|\widetilde{\mQ}\| = \sqrt{\frac{2}{W}}.$ Thus, $
	\PP\bigg(\|\widetilde{\mV}^*\ve_k\|_2> \sqrt{\frac{2R}{W}}+t\sqrt{\frac{2}{W}}\bigg)\leq 2e^{-t^2/16},$ and using the union bound
	\[
	\PP\bigg(\max_{k \in [W]}\|\widetilde{\mV}^*\ve_k\|_2> \sqrt{\frac{2R}{W}}+t\sqrt{\frac{2}{W}}\bigg) \leq 2We^{-t^2/16}.
	\]
	We can make this probability less than $W^{-\beta}$ by taking $t \geq C\sqrt{\log W}$, and \eqref{eq:Vp} follows.
	
	Now to prove \eqref{eq:kappa-bound}, and \eqref{eq:vartheta-bound}, we can write $\mH = \mW\mZ\mQ^*$. Let $\vw_k$ be the $k$th column of $\mW^*$ and let $\tilde{\vu}_m^*$ be the $m$th row of $\widetilde{\mU}$. For a fixed row index $m$ and column index $k$, we can write an entry of $\widetilde{\mU}\widetilde{\mV}^*$ as
	\begin{align*}
	\left[\widetilde{\mU}\widetilde{\mV}^*\right]_{m,k} &= \left[\widetilde{\mU}(\mW\mZ\mQ^*\mV)^*\right]_{m,k} = \left[\widetilde{\mU}\widetilde{\mQ}^*\mZ\mW^*\right]_{m,k}= \tilde{\vu}_m^*\widetilde{\mQ}^*\mZ \vw_k
	\end{align*}
	where $\widetilde{\mQ} = \mQ^*\mV$ is a tall orthonormal matrix. Let $\vp_m^* = \tilde{\vu}_m^*\widetilde{\mQ}^*$.  Since the $z[\omega]$ are iid random variables, a standard applications of the Hoeffding inequality tells us that
	\begin{align*}
	\PP\big(\big|\big[\widetilde{\mU}\widetilde{\mV}^*\big]_{m,k}\big|>\lambda\big)\leq 2e^{-\lambda^2/2\sigma^2}, \quad \mbox{where} \quad 
	\sigma^2 & = \E \vw_k\mZ^*\vp_m\vp_m^* \mZ \vw_k \leq \frac{2\left\|\vp_m\right\|_2^2}{W} = \frac{2\left\|\tilde{\vu}_m\right\|_2^2}{W}.
	\end{align*}
	Thus, with probability exceeding $1-2W^{-\beta}$
	\begin{align}\label{eq:Hoeffding-result}
	\bigg|\left[\widetilde{\mU}\widetilde{\mV}^*\right]_{m,k}\bigg|^2 \leq \frac{4\beta\log W}{W}\left\|\tilde{\vu}_m\right\|_2^2.
	\end{align}
    Taking the maximum over $m \in [M]$, and $k \in [W]$ on both sides, and plugging in the bound in \eqref{eq:Up} shows that
    \begin{align}\label{eq:UpVp}
    \frac{MW}{R}\max_{m \in [M]}\max_{k \in [W]}~\left|\ve_m^*\widetilde{\mU}\widetilde{\mV}^*\ve_k\right|^2 \leq C_\beta\log W\max(1,\frac{1}{R}\log M)
    \end{align}
    holds with probability at least $1-\mathcal{O}(W^{-\beta}+M^{-\beta}) = 1-\setO(W^{-\beta})$, where the equality follows from the fact that $W \geq M$. This proves the first claim \eqref{eq:kappa-bound} in Lemma \ref{lem:coherence}. 
    
    Similarly, \eqref{eq:Hoeffding-result} implies that 
    \begin{align}
    \sum_{k \in \setB_\ell} \big|\big[\widetilde{\mU}\widetilde{\mV}^*\big]_{m,k}\big|^2 \leq |\setB_\ell| \frac{4\beta\log W}{W}\left\|\tilde{\vu}_m\right\|_2^2 = \frac{4\beta\log W}{\Omega}\|\tilde{\vu}_m\|_2^2,
    \end{align}
     where $\setB_\ell$ is defined in \eqref{eq:setBl}, and the last equality follows from the fact that $|\setB_\ell| = W/\Omega$. Finally, evaluating the maximum over $m \in [M]$, and $\ell \in [\Omega]$ on both sides, and using the bound in \eqref{eq:Up} shows that 
     \[
     \frac{M\Omega}{R}\max_{m \in [M]} \max_{\ell \in [\Omega]}\|\ve_m^*\widetilde{\mU}\widetilde{\mV}^*\mI_{\setB_\ell}\|_2^2 \leq C_\beta \log W \max(1,\frac{1}{R}\log M),
     \]
    which proves the second claim \eqref{eq:vartheta-bound} in Lemma \ref{lem:coherence}. 
\end{proof}

\section{Proof of Theorem \ref{thm:Exact-rec}}\label{sec:Proof-Exact-rec}

\subsection {Preliminaries}

Recall from \eqref{eq:meas}, we obtain measurements $y_n[\ell]$ of an unknown low-rank matrix $\mX_0$ through a random $M \times W$ rank-1 measurement ensemble $\va_n\vd_{n\ell}^*,~(n,\ell) \in [N]\times[\O]$, where $\va_n^* \in \R^M$ denote the rows of mixing matrix $\mA\in \R^{N \times M}$, and $\vd_{n\ell} \in \R^W$ are random binary on support set $\setB_\ell$, and zero elsewhere. In addition, the vectors$\vdash_{n\ell}$ are independently generated for every $n$, and $\ell$. In Theorem \ref{thm:Exact-rec}, the AVMM simply replicates (without mixing) $P$ copies of $M$ input signals to produce $N$ output signals. This amounts to choosing 
\begin{align}\label{eq:A-thm1}
\mA = \tfrac{1}{\sqrt{P}}[\mI_M\otimes \mathbf{1}_P] = \tfrac{1}{\sqrt{P}}[\mI_M,\ldots, \mI_M] ^*,
\end{align}
 where $P = N/M$. From this construction, we have $\|\va_n\|_2^2 = 1/P$, and $\mA^*\mA = \mI_M$. Also recall that using the definition of linear map $\setA$ in \eqref{eq:cA-def}, the measurements are compactly expressed as $\vy = \setA(\mX_0).$
Moreover, the adjoint operator is  
\[
\setA^*(\vy) = \sum_{n\in [N]}\sum_{\ell \in [\Omega]} y_{n}[\ell]\va_n\vd_{n\ell}^* = \sum_{n\in [N]}\sum_{\ell \in [\Omega]} \va_n\va_n^*\mX_0\vd_{n\ell}\vd_{n\ell}^*,
\]
where the second equality is the result of \eqref{eq:meas}. It will also be useful to visualize the linear operator $\setA^*\setA$ in a matrix form:
\begin{equation}\label{eq:cAtcA}
\setA^*\setA = \sum_{n \in [N]}\sum_{\ell \in [\Omega]} \va_n\vd_{n\ell}^* \boxtimes \va_n\vd_{n\ell}^* = \sum_{n\in[N]}\sum_{\ell \in [\Omega]} \va_n\va_n^* \otimes \vd_{n\ell}\vd_{n\ell}^*,
\end{equation}
where $\boxtimes$ denotes the tensor product.  In general, the tensor product of rank-1 matrices $\vx_1\vy_1^*$, $\vx_2\vy_2^*$ with $\vx_i \in \R^M$, and $\vy_i \in \R^W$ is given by the big matrix
\[
\vx_1\vy_1^*\boxtimes\vx_2\vy_2^* = 
\begin{bmatrix}
x_1[1]x_2[1]\vy_1\vy_2^* & x_1[1]x_2[2]\vy_1\vy_2^* & \cdots & x_1[1]x_2[M]\vy_1\vy_2^* \\
x_1[2]x_2[1]\vy_1\vy_2^* & x_1[2]x_2[2]\vy_1\vy_2^* & \cdots & x_1[2]x_2[M]\vy_1\vy_2^*\\
\vdots & & \ddots & \\
x_1[M]x_2[1]\vy_1\vy_2^* & \cdots & \cdots & x_1[M]x_2[M]\vy_1\vy_2^*
\end{bmatrix}.
\]
With this definition it is easy to visualize that $\E\setA^*\setA = \mathcal{I}$. Let $\{\vu_k^*\}_{  k \in [M]}$, and $\{\vv_k^*\}_{k \in [W]}$ denote the rows of the matrices $\mU$, and $\mV$, respectively. 

We begin by defining a subspace $T \subset \R^{M \times W}$ associated with $\mX_0$ with singular-value decomposition given by  $\mX_0 = \mU\mSigma\mV^*$ 
\[
T = \{\mX: \mX = \mU\mZ_1^* + \mZ_2\mV^* , \mZ_1 \in \R^{W\times R}, \mZ_2 \in \R^{M \times R}\}.
\]
The orthogonal projections onto $T$, and its orthogonal complement $T^\perp$ are defined as $\PT(\mZ) = \mU\mU^* \mZ + \mZ\mV\mV^*  - \mU\mU^*\mZ\mV\mV^*$, and $\PTc(\mZ) = \mZ- \PT(\mZ)$, respectively. In the proofs later, we repeatedly make use of the following calculation
\begin{align*}
\left\|\PT(\va_n\vd^*_{n\ell})\right\|_{\F}^2  &= \< \PT (\va_n\vd^*_{n\ell}) , \va_n\vd^*_{n\ell}\>\\
& = \< \mU^*\va_n\vd^*_{n\ell} , \mU^*\va_n\vd^*_{n\ell}\> + \< \va_n\vd^*_{n\ell}\mV , \va_n\vd^*_{n\ell}\mV\> - \< \mU^*\va_n\vd^*_{n\ell}\mV, \mU^*\va_n\vd^*_{n\ell}\mV \>\notag\\
&= \|\mU^*\va_n\vd^*_{n\ell}\|_{\F}^2 + \|\va_n\vd^*_{n\ell}\mV\|_{\F}^2 - \|\mU^*\va_n\vd^*_{n\ell}\mV\|_{\F}^2\leq \|\mU^*\va_n\vd^*_{n\ell}\|_{\F}^2 + \|\va_n\vd^*_{n\ell}\mV\|_{\F}^2. 
\end{align*}
Observe that 
\[
\|\mU^*\va_n\vd^*_{n\ell}\|_{\F}^2  = \|\mU^*\va_n\|_2^2\|\vd^*_{n\ell}\|_2^2 = \frac{W}{\Omega}\|\mU^*\va_n\|_2^2, ~ \text{and} ~ \|\va_n\vd_{n\ell}^*\mV\|_{\F}^2 = \|\va_n\|_2^2 \|\vd_{n\ell}^*\mV\|_2^2 =  \frac{1}{P}\|\vd_{n\ell}^*\mV\|_2^2.
\]
This leads us to 
\begin{align}
\|\PT(\va_n\vd^*_{n\ell})\|_{\F}^2 &\leq \frac{W}{\O} \|\mU^*\va_n\|_2^2 + \frac{1}{P}\|\vd_{n\ell}^*\mV\|_2^2. \label{eq:PTA-Fro-norm}
\end{align}
Finally, we will also require a bound on the operator norm of the linear map $\setA$. To this end, note that the measurement matrices $\va_n\vd_{n\ell}^*$ are orthogonal for every $\ell \in [\O]$ in the standard Hilbert-Schmidt inner product, that is, $\<\va_n\vd^*_{n\ell},\va_n\vd^*_{n\ell^\prime}\>=0$  whenever $\ell \neq \ell^\prime$. This directly implies a  following bound on the operator of $\setA$:
\begin{equation}\label{eq:cA-operatornorm}
\|\setA\| \leq \sqrt{\sum_{n \in [N]} \|\va_n\vd_{n\ell}^*\|_{\F}^2} = \sqrt{\frac{MW}{\O}} \leq W,
\end{equation}
where in the last inequality we used the fact that $M \leq W$, and $\Omega \geq 1$. Although a much tighter bound can be achieved using results from random matrix theory, the loose bound is sufficient for our purposes. 
\subsection{Sufficient condition for the uniqueness}
Uniqueness of the minimizer to \eqref{eq:nuclearnorm_min} can be guaranteed by the sufficient condition \cite{candes09ex,gross11re} given below. 
\begin{prop}\label{prop:suff-cond}
	The matrix $\mX$ is the unique minimizer to \eqref{eq:nuclearnorm_min} if $\exists \mY \in \mbox{Range}(\setA^*)$ such that $\forall \mZ\in \mbox{Null}(\setA)$ 
	\begin{equation*}
	\left(1-\|\PTc(\mY)\|\right)\|\PTc(\mZ)\|_{*}-\|\mU\mV^*-\PT(\mY)\|_{\F}\|\PT(\mZ)\|_{\F} > 0.
	\end{equation*}
\end{prop}
In light of the proposition, it is sufficient to show that  $\exists \mY \in \mbox{Range}(\setA^*)$, such that  
\begin{equation}\label{eq:sufficient-condition-1}
\|\PT(\mY)-\mU\mV^*\|_{\F} \leq 1/3W, \quad \|\PTc(\mY)\| \leq 1/2, 
\end{equation}
and  for every $\mZ \in \mbox{Null}(\setA)$, 
\begin{align}\label{eq:sufficient-condition-2}
\|\PTc(\mZ)\|_{\F} \geq (1/\sqrt{2}W)\|\PT(\mZ)\|_{\F}
\end{align}
 holds. This can be immediately shown as follows $
0  =  \|\setA(\mZ)\|_{\F} \geq  \|\setA(\PT(\mZ))\|_{\F} - \|\setA(\PTc(\mZ))\|_{\F} \geq \|\setA(\PT(\mZ))\|_{\F} - W\|\PT^\perp(\mZ)\|_{\F}.$ In addition, for an arbitrary $\mZ$, we have 
\begin{align*} 
\|\setA(\PT(\mZ))\|_{\F}^2 &= \left\langle \setA(\PT(\mZ)), \setA(\PT(\mZ))\right\rangle =\left\langle \mZ, \PT\setA^*\setA\PT(\mZ)\right\rangle\notag\\
&\geq (1- \|\PT\setA^*\setA\PT-\PT\|)\|\PT(\mZ)\|_{\F}^2\geq \frac{1}{2} \|\PT(\mZ)\|_{\F}^2,
\end{align*}
where the last inequality is obtained by plugging in $\|\PT\setA^*\setA\PT-\PT\| \leq 1/2$, which will be shown to be true under appropriate choice of $\Omega$ with probability at least $1-\mathcal{O}(W^{-\beta})$ in Corollary \ref{cor:injectivity}. Combining the last two inequalities gives us the result in \eqref{eq:sufficient-condition-2}.

\subsection{Golfing scheme for the random modulator}
For technical reasons, we will work with partial linear maps $\setA_p: \R^{M \times W} \rightarrow \R^{M\O},~p\in[P]$ modified from the linear map $\setA$ in \eqref{eq:cA-def}.  Define $P$ partitions $\{\Gamma_p\}_{p=1}^P$ of the index set $[N]$ as $\Gamma_p := \{(p-1)M+1,\ldots, pM\}$ for every $p \in [P]$. Clearly, $\Gamma_p \cap \Gamma_{p^\prime} = \emptyset$, and $\cup_{p=1}^P \Gamma_p = [N]$. We will take the number of partitions\footnote{We assume that $N/M$ is an integer--this can be ensured in the worst case by doubling $N$.} $P = N/M$.
The partial linear maps $\setA_p$ are defined as 
\begin{equation}\label{eq:cAp}
\setA_p(\mX) := \{\va^*_n\mX\vd_{n\ell},~ n \in \Gamma_p, ~\ell \in [\O]\}.
\end{equation}
Using the definition of $\mA$ in \eqref{eq:A-thm1}, it is clear that 
\begin{align}\label{eq:a_n-e_m}
\{\va_n: n \in \Gamma_p\} = \{\tfrac{1}{\sqrt{P}}\ve_m: m \in [M] \} ~ \mbox{for every} ~p \in [P] \implies \sum_n \va_n\va_n^* = \tfrac{1}{P}\mI_M.
\end{align}
The corresponding adjoint operator maps a vector $\vz \in \R^{M\O}$ to an $M \times W$ matrix 
\[
\setA_p^*(\vz) = \underset{n \in \Gamma_p}{\sum}\sum_{\ell \in [\O]} z_{n}[\ell]\va_n\vd_{n\ell}^*.
\]
It will also be useful to make a note of the following versions of the above definition
 \begin{align}\label{eq:cAp-T-cAp}
 \setA_p^*\setA_p(\mX) &= \underset{n \in \Gamma_p}{\sum}\sum_{\ell \in [\O]} \va_n\va_n^*\mX\vd_{n\ell}\vd_{n\ell}^*, ~\text{and} ~ \setA_p^*\setA_p = \underset{n \in \Gamma_p}{\sum}\sum_{\ell \in [\O]}\va_n\vd_{n\ell}^*\boxtimes\va_n\vd_{n\ell}^*, 
 \end{align}
where the second definition just emphasizes the fact that the linear map $\setA_p^*\setA_p$ can be thought of as a big $MW \times MW$ matrix that operates on a vectorized $\mX$. 

With the linear operators defined on the subsets $\{\Gamma_p\}_{p=1}^P$ above, we write the iterative construction of the dual certificate:
\begin{align}\label{eq:iterative-construction-Y}
\mY_p = \mY_{p-1}-\setA_p^*\setA_p\big(\PT(\mY_{p-1})-\mU\mV^*\big), \quad \text{where} \quad \mY_p \in \mbox{Range}(\setA^*),
\end{align}
where we take $\mY_0 = \mathbf{0}$.  Projecting onto the subspace $T$ on both sides results in $
\PT(\mY_p) = \PT(\mY_{p-1})-\PT\setA_p^*\setA_p(\PT(\mY_{p-1})-\mU\mV^*)$.
Define
\begin{align}\label{eq:Wk-def}
\mW_p :& = \PT (\mY_p) -\mU\mV^*,
\end{align}
the iteration takes the equivalent form $\mW_p = \mW_{p-1}-\PT\setA_p^*\setA_p\PT(\mW_{p-1})$. We will take $\mY = \mY_P$ to be our candidate for the dual certificate and the rest of this section concerns showing that $\mY_P$ obeys the conditions in \eqref{eq:sufficient-condition-1}. Let's start by showing that $\|\PT(\mY_P)-\mU\mV^*\|_{\F} \leq (3W)^{-1}$ holds. To this end, note that from the iterative construction above, the following bound immediately follows
\begin{align*}
\left\|\mW_p\right\|_{\F} &\leq \|\PT\setA_p^*\setA_p\PT-\PT\|\|\mW_{p-1}\|_{\F}.
\end{align*}
From Lemma \ref{lem:injectivity}, we have $\|\PT\setA_p^*\setA_p\PT-\PT\| \leq 1/2$ for every $p \in [P]$. This means that $\|\mW_p\|_{\F}$ cuts after every iteration giving us the following bound on the Frobenius norm of the final iterate $\mW_P$
\begin{align}\label{eq:P-bound}
\left\|\mW_P\right\|_{\F} &\leq 2^{-P}\|\mU\mV^*\|_{\F} = 2^{-P}\sqrt{R} \leq (3W)^{-1} \quad\mbox{when}\quad P \geq  2\log_2 (3W).
\end{align}
Using the union bound over $p \in [P]$, the bound on $\|\mW_P\|_{\F}$ holds with probability at least $1-\mathcal{O}(PW^{-\beta})\geq 1-\mathcal{O}(W^{1-\beta}) $. This proves that the candidate dual certificate $\mY_P$ obeys the first condition in \eqref{eq:sufficient-condition-1}.

Since $P = N/M$, this implies that the number $N$ of output channels from the analog-vector-matrix multiplier in Figure \ref{fig:Rand-dem} must be a factor of roughly $\log W$ compared to the input channels, i.e., 
\begin{equation}\label{eq:Nbound}
N \geq CM\log W.
\end{equation}
However, we believe this requirement is merely an artifact of using golfing scheme as the proof strategy for Theorem \ref{thm:Exact-rec}. In practice, all our simulations point to $N = M$, that is, the number of channels at the output of the AVMM are equal to the input channels.

From the iterative construction \eqref{eq:iterative-construction-Y}, it is clear that $\mY_P = -\sum_{p = 1}^P \setA_p^*\setA_p (\mW_{p-1})$. We will now converge on showing that $\mY_P$ satisfies the second condition in \eqref{eq:sufficient-condition-1}. Begin with 
\begin{align*}
\left\|\PTc(\mY_P)\right\| &\leq \sum_{p = 1}^P \left\|\PTc(\setA_p^*\setA_p(\mW_{p-1}))\right\| = \sum_{p = 1}^P \left\|\PTc\left(\setA_p^*\setA_p(\mW_{p-1})-\mW_{p-1}\right)\right\|, 
\end{align*}
where the last equality follows from the fact that $\mW_{p-1} \in T$. Since $\|\PTc\|\leq 1$, we have 
\begin{align*}
\sum_{p = 1}^P \left\|\PTc\left(\setA_p^*\setA_p(\mW_{p-1})-\mW_{p-1}\right)\right\| \leq \sum_{p = 1}^P \left\|\setA_p^*\setA_p(\mW_{p-1})-\mW_{p-1}\right\| \leq \sum_{p=1}^P 2^{-p-1}  < 1/2,
\end{align*}
the second last inequality above requires $\|\setA_p^*\setA_p(\mW_{p-1})-\mW_{p-1}\| \leq 2^{-p-1}, ~~ \text{for every}~p \in [P]$, which using Lemma \ref{lem:JL} is only true when $\Omega \geq C_\beta\nu^2(R/M)W\log^2W$ with probability at least $1-\mathcal{O}(PW^{-\beta})\geq 1-\mathcal{O}(W^{-\beta+1})$, where the factor $P$ comes from the union bound over every $p \in [P]$. Lemma \ref{lem:JL} 

Combining sample complexities in \eqref{eq:sample-complexity-injectivity}, and \eqref{eq:sample-complexity-JL}, and using the definition of $\vartheta^2$ in \eqref{eq:max-coherence-def} gives us the proof of Theorem \ref{thm:Exact-rec}.


\subsection{Key Lemmas}
We now state the key lemmas to prove Theorem \ref{thm:Exact-rec}.
\begin{lem}\label{lem:injectivity}
	Fix $\beta \geq 1$. Assume that 
	\begin{align}\label{eq:sample-complexity-injectivity}
	\O \geq C_\beta \max(\mu^2,\rho^2) \frac{R}{M}W\log^2W,
	\end{align}
	 where $C_\beta$ is a universal constant only depending on $\beta$. Then the linear operator $\setA_p$ obeys
	 \[\left\|\PT\setA_p^*\setA_p\PT-\PT\right\|\leq \frac{1}{2}
	 \]
	  with probability at least  $1- \mathcal{O}(W^{-\beta})$.
\end{lem}
Proof of this lemma will be presented in Section \ref{sec:injectivity}.
\begin{cor}\label{cor:injectivity}
	Fix $\beta \geq 1$. Assume $\O \geq C_\beta \max(\mu^2,\rho^2) \tfrac{R}{M}W\log^2W$, where $C_\beta$ is a universal constant that only depends on $\beta$. Then the linear operator $\setA$ defined in \eqref{eq:cA-def} obeys  $\|\PT\setA^*\setA\PT-\PT\|\leq 2^{-1}$ with probability at least $1- \mathcal{O}(W^{-\beta})$.
\end{cor}
\begin{proof}
	Proof of this corollary follows exactly the same steps as the proof of Lemma \ref{lem:injectivity} with only difference being that we take $P = N/M = 1$. 
\end{proof}
\begin{lem}\label{lem:coherence-iterates}
	Define coherence $\nu_{p}^2$ of the iterates $\mW_p$ as 
	\begin{equation}\label{eq:nup}
	\nu_p^2 : = \frac{M\O}{R}\max_{m \in [M]}\max_{ \ell \in [\Omega]}\|\ve_m^*\mW_{p}\mI_{\setB_\ell}\|_{\F}^2.
	\end{equation}
	Then under the same conditions as in \eqref{eq:sample-complexity-injectivity}, we have $\nu_{p} \leq (1/2)\nu_{p-1}$ with probability at least $1-\mathcal{O}(W^{-\beta})$. 
\end{lem}
\begin{proof}
	The proof of this lemma follows similar techniques and matrix Bernstein inequality as used in Lemma \ref{lem:injectivity}. Similar results can be found in \cite{ahmed2012blind}. We skip the proof due to space constraints. 
\end{proof}
Using the definition of $\nu^2$ in \eqref{eq:coherence}, and the fact that $\mW_0 = -\mU\mV^*$, we can see that $\nu_0^2 =  \nu^2$.  Invoking Lemma \ref{lem:coherence-iterates} for every $p \in [P]$, we can iteratively conclude that 
\begin{equation}\label{eq:coherence-bound}
\nu_P \leq 2^{-P}\nu
\end{equation}
with probability at least $1-\mathcal{O}(PW^{-\beta}) \geq 1-\mathcal{O}(W^{1-\beta})$. 
\begin{lem}\label{lem:JL}
	Fix $\beta \geq 1$. Take 
	\begin{align}\label{eq:sample-complexity-JL}
	\O \geq C\beta \nu^2 \frac{R}{M}W\log^2 W
	\end{align}
	 for a sufficiently large constant $C$. Let $\mW_{p-1}$ be a fixed $M \times W$ matrix defined in \eqref{eq:Wk-def} then
	\[
	\left\|\setA_p^*\setA_p(\mW_{p-1})- \mW_{p-1}\right\| \leq 2^{-p-1}
	\]
	with probability at least $1-\mathcal{O}(W^{-\beta})$.
\end{lem}
Proof of this lemma will be presented in Section \ref{sec:JL}
\section{Proof of Key Lemmas}
Proof of all the key lemmas mainly relies on using matrix Bernstein inequality to control the operator norms of sums of random matrices. 
\subsection{Matrix Bernstein-type inequality}
We will use a specialized version of the matrix Bernstein-type inequality \cite{tropp12us,koltchinskii10nu} that depends on the Orlicz norms. The Orlicz norm of a random matrix $\mZ$ is defined as 
\begin{equation}\label{eq:matpsinorm}
\|\mZ\|_{\psi_\alpha} = \inf \{ u>0: \E \exp (\frac{\|\mZ\|^\alpha}{u^\alpha}) \leq 2\}, \quad \alpha \geq 1.
\end{equation}
Suppose that, for some constant $U_{\alpha} > 0, \|\mZ_q\|_{\psi_\alpha} \leq U_{(\alpha)}, q = 1, \ldots, Q$ then the following proposition holds. 
\begin{prop}\label{prop:matbernpsi}
	Let $\mZ_1,\mZ_2,\ldots,\mZ_Q$ be iid random matrices with dimensions $M \times N$ that satisfy $\E (\mZ_q) = 0$. Suppose that $\|\mZ\|_{\psi_\alpha} < \infty$ for some $\alpha \geq 1 $. Define 
	\begin{align}\label{eq:matbernsigma}
	&\sigma_Z  = \max \left\{\left\|\sum_{q = 1}^Q (\E \mZ_q\mZ_q^*)\right\|^{1/2},\left\|\sum_{q = 1}^Q (\E \mZ_q^*\mZ_q)\right\|^{1/2} \right\}
	\end{align}
	Then $\exists$ a constant $C > 0$ such that , for all $t>0$, with probability at least $1-\mathrm{e}^{-t}$
	\begin{align}\label{eq:matbernpsi}
	&\left\|\mZ_1+\cdots+\mZ_Q\right\| \leq C \max\left\{\sigma_Z\sqrt{t+\log(M+N)},U_{\alpha}\log^{1/\alpha}\left(\frac{QU_\alpha^2}{\sigma_Z^2}\right)(t+\log(M+N))\right\}
	\end{align}
\end{prop}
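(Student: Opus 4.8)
The plan is to reduce the Orlicz-$\psi_\alpha$ case to the bounded-summand case, where the standard matrix Bernstein inequality of \cite{tropp12us} applies directly, by a truncation argument. The hypothesis $\|\z_q\|_{\psi_\alpha} \leq U_\alpha$ is equivalent up to absolute constants to the tail bound $\P{\|\z_q\| > s} \leq 2\exp(-s^\alpha/U_\alpha^\alpha)$, so each summand is concentrated below a level of order $U_\alpha$ times a logarithmic factor. First I would fix a truncation radius $R_0 = U_\alpha \log^{1/\alpha}(QU_\alpha^2/\sigma_Z^2)$ and split each matrix as $\z_q = \z_q\mathbf{1}\{\|\z_q\| \leq R_0\} + \z_q\mathbf{1}\{\|\z_q\| > R_0\}$. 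Writing $\a_q$ for the truncated part and $\vm{B}_q$ for the tail part, the centered variables $\tilde{\z}_q = \a_q - \E\a_q$ are mean zero, have operator norm at most $2R_0$, and (after a short check) have a second-moment parameter within a constant factor of $\sigma_Z$.

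The second step is to apply the bounded matrix Bernstein inequality to $\sum_q \tilde{\z}_q$. For each $t > 0$ this yields a bound of the form $\|\sum_q \tilde{\z}_q\| \lesssim \sigma_Z\sqrt{t + \log(M+N)} + R_0(t + \log(M+N))$ with probability at least $1 - e^{-t}$; substituting the chosen $R_0$ into the second summand produces exactly the two competing terms appearing in \eqref{eq:matbernpsi}.

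The third step is to verify that the truncation error is negligible. Since $\E\z_q = 0$ we have $\E\a_q = -\E\vm{B}_q$, so the decomposition reads $\sum_q\z_q = \sum_q\tilde{\z}_q + \sum_q\vm{B}_q - \sum_q\E\vm{B}_q$. A union bound over the $Q$ summands using the $\psi_\alpha$ tail shows that, for this $R_0$, the event $\{\max_q\|\z_q\| \leq R_0\}$ holds with high probability, on which every $\vm{B}_q$ vanishes; and the deterministic correction obeys $\|\sum_q\E\vm{B}_q\| \leq \sum_q\E[\,\|\z_q\|\,\mathbf{1}\{\|\z_q\| > R_0\}\,]$, which is controlled by integrating the $\psi_\alpha$ tail from $R_0$ upward and is dominated by the other terms for this choice of $R_0$. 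Combining the three pieces and absorbing absolute constants into $C$ gives the stated inequality.

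The main obstacle is the calibration of the truncation radius $R_0$: it must be large enough that both the union-bound failure probability and the mean-correction $\sum_q\E\vm{B}_q$ are negligible, yet small enough that the linear-in-$R_0$ contribution from matrix Bernstein remains the advertised $U_\alpha\log^{1/\alpha}(QU_\alpha^2/\sigma_Z^2)(t+\log(M+N))$. Making the argument of the logarithm come out as $QU_\alpha^2/\sigma_Z^2$ rather than merely $Q$ requires balancing the per-term tail against the effective variance, and one must also confirm that centering inflates $\sigma_Z$ by at most a constant. The remaining details are the routine bookkeeping already carried out in \cite{tropp12us,koltchinskii10nu}, from which this proposition is a direct specialization.
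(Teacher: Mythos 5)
The paper never proves Proposition~\ref{prop:matbernpsi}; it imports it verbatim from \cite{tropp12us,koltchinskii10nu}, so your proposal is being measured against those references rather than against an argument in the text. Your overall architecture --- truncate at a radius $R_0$, apply the bounded matrix Bernstein inequality to the centered truncated part, and argue that the tail part is negligible --- is indeed the strategy behind the cited results. The gap is in the calibration of your third step, and it is not routine bookkeeping. With $R_0 = U_\alpha\log^{1/\alpha}(QU_\alpha^2/\sigma_Z^2)$, the Orlicz tail gives $\P{\|\z_q\|>R_0}\leq 2\exp(-R_0^\alpha/U_\alpha^\alpha) = 2\sigma_Z^2/(QU_\alpha^2)$, so your union bound over the $Q$ summands yields failure probability $2\sigma_Z^2/U_\alpha^2$. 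Since $\sigma_Z^2$ can be as large as a constant times $QU_\alpha^2$ (for i.i.d.\ nondegenerate summands it grows linearly in $Q$), this quantity is not small --- it can exceed $1$ --- and, being independent of $t$, it can never be driven below $e^{-t}$ for large $t$, whereas the proposition must hold for all $t>0$. If you instead inflate the truncation level to $U_\alpha(t+\log Q)^{1/\alpha}$ so that the union bound closes, the linear term of Bernstein becomes $U_\alpha(t+\log Q)^{1/\alpha}(t+\log(M+N))$, which has the wrong dependence on $t$ (quadratic for $\alpha=1$) and does not match \eqref{eq:matbernpsi}.

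The mean correction suffers from the same miscalibration: integrating the tail gives $\sum_q\E\left[\|\z_q\|\mathbf{1}\{\|\z_q\|>R_0\}\right]\lesssim (R_0+U_\alpha)\,\sigma_Z^2/U_\alpha^2$ for your choice of $R_0$, which is dominated by the displayed terms only when $\sigma_Z^2\lesssim U_\alpha^2$, not in general. The missing idea --- and the reason the $\log^{1/\alpha}(QU_\alpha^2/\sigma_Z^2)$ factor appears at all --- is that the references do not handle the exceedances by a union bound: they control the scalar sum $\sum_q\|\z_q\|\mathbf{1}\{\|\z_q\|>R_0\}$ by a concentration inequality of Hoffmann--Jorgensen/Adamczak type, exploiting that this nonnegative sum has very small mean and a $\psi_\alpha$-controlled maximum, so that its contribution is itself at most $CU_\alpha\log^{1/\alpha}(QU_\alpha^2/\sigma_Z^2)\,(t+\log(M+N))$ with probability $1-e^{-t}$ for every $t$. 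Without that device the second and third steps of your plan do not combine into the stated inequality; with it, your outline becomes essentially the proof in \cite{koltchinskii10nu}.
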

\subsection{Proof of Lemma \ref{lem:injectivity}}\label{sec:injectivity}
We start by writing $\PT\setA_p^*\setA_p\PT$ as a sum of independent random matrices using \eqref{eq:cAp-T-cAp} to obtain 
\[
\PT\setA_p^*\setA_p\PT  = \underset{n \in \Gamma_p}{\sum}\sum_{\ell \in [\O]} P\left[\PT (\va_n\vd_{n\ell}^*)\boxtimes \PT(\va_n\vd_{n\ell}^*)\right].
\]
Using \eqref{eq:a_n-e_m},  and the fact that $\sum_{\ell \in [\Omega]}\E \vd_{n\ell}\vd_{n\ell}^* = \mI_W$, the expectation of the quantity above evaluates to
\[
\E \PT\setA_p^*\setA_p \PT = \PT\bigg[P \underset{n \in \Gamma_p}{\sum}\sum_{\ell \in [\O]}  \E (\va_n\va_n^* \otimes \vd_{n\ell}\vd_{n\ell}^*) \bigg]\PT = \PT \bigg[ P\sum_{n \in \Gamma_p}\va_n\va_n^* \otimes \sum_{\ell \in [\Omega]}\E \vd_{n\ell}\vd_{n\ell}^*\bigg]\PT= \PT.
\]
The quantity $\PT\setA_p^*\setA_p\PT- \PT$ can therefore be expressed as a sum of independent zero mean random matrices in the following form 
\begin{align*}
\PT\setA_p^*\setA_p\PT-\PT = \underset{n \in \Gamma_p}{\sum}\sum_{\ell \in [\O]}  P\left[\PT (\va_n\vd_{n\ell}^*)\boxtimes \PT(\va_n\vd_{n\ell}^*)- \E \PT (\va_n\vd_{n\ell}^*)\boxtimes \PT(\va_n\vd_{n\ell}^*)\right]. 
\end{align*}
We will employ matrix Bernstein inequality to control the operator norm of the above sum. To proceed define the operator $\mathcal{Z}_{n\ell}$ which maps $\mZ$ to $\left\langle \PT(\va_n\vd_{n\ell}^*),\mZ\right\rangle \PT(\va_n\vd_{n\ell}^*)$, i.e., $\mathcal{Z}_{n\ell} = \left[\PT (\va_n\vd_{n\ell}^*)\boxtimes \PT(\va_n\vd_{n\ell}^*)\right]$. This operator is rank one, therefore, the operator norm $\left\|\mathcal{Z}_{n\ell}\right\| = \left\|\PT(\va_n\vd_{n\ell}^*)\right\|_{\F}^2$. To ease the notation, we will use $\underset{n,\ell}{\sum}$ as a shorthand for $\underset{n \in \Gamma_p}{\sum}\underset{\ell \in [\O]}{\sum} $. 
We begin by computing the variance in \eqref{eq:matbernsigma} as follows
\begin{align*}
\sigma^2 &= P^2\bigg\|\underset{n,\ell}{\sum}\E (\mathcal{Z}_{n\ell} -\E\mathcal{Z}_{n\ell} )^2 \bigg\|= P^2\bigg\|\underset{n,\ell}{\sum}[\E \mathcal{Z}^2_{n\ell} -(\E\mathcal{Z}_{n\ell} )^2]\bigg\| \leq P^2\bigg\|\underset{n,\ell}{\sum}\E \mathcal{Z}^2_{n,\ell}\bigg\|,
\end{align*}
where the last inequality follows from the fact that $\E (\mathcal{Z}^2_{n\ell})$, and $(\E\mathcal{Z}_{n\ell} )^2$ are symmetric, and positive-semidefinite matrices.  The square of the rank-1 matrices $\mathcal{Z}_{n\ell}$ is simply given by  $\mathcal{Z}^2_{n\ell} = \|\PT(\va_n\vd_{n\ell}^*)\|_\F^2 \mathcal{Z}_{n\ell}$. 
Now we develop the operator norm of the result simplified expression using \eqref{eq:PTA-Fro-norm}
\begin{align*}
\bigg\|\E \underset{n,\ell}{\sum}\left\|\PT(\va_n\vd_{n\ell}^*)\right\|_{\F}^2 \mathcal{Z}_{n\ell}\bigg\| &\leq \bigg\|\underset{n,\ell}{\sum}\bigg(\frac{W}{\O}\|\mU^*\va_n\|_2^2 +  \frac{1}{P}\|\mV^*\vd_{n\ell}\|_2^2 \bigg) \mathcal{Z}_{n\ell}\bigg\|.
\end{align*}
Using the definition in \eqref{eq:coherence}, and \eqref{eq:a_n-e_m}, we can bound $\|\mU^*\va_n\|_2^2 \leq \mu^2R/PM$. Using this fact, we have
\begin{align}
\sigma^2 &\leq \bigg[P\mu^2\frac{RW}{M\Omega} \bigg\|\underset{n,\ell}{\sum}\E\mathcal{Z}_{n\ell} \bigg\|+P\bigg\| \underset{n,\ell}{\sum}\E\|\mV^*\vd_{n\ell}\|_2^2\mathcal{Z}_{n\ell}\bigg\|\bigg] \leq \bigg[\mu^2\frac{RW}{M\O} + P\bigg\|\underset{n,\ell}{\sum}\E\|\mV^*\vd_{n\ell}\|_2^2\mathcal{Z}_{n\ell}\bigg\|\bigg].\label{eq:var-main}
\end{align}
The second term in \eqref{eq:var-main} can be simplified as 
\begin{align}
\bigg\|\underset{n,\ell}{\sum}\E\|\mV^*\vd_{n\ell}\|_2^2\mathcal{Z}_{n\ell}\bigg\|&= \bigg\|\PT \underset{n,\ell}{\sum}\left(\E\|\mV^*\vd_{n\ell}\|_2^2 (\va_n\vd_{n\ell}^* \boxtimes \va_n\vd_{n\ell}^*)\right)\PT\bigg\|\notag\\
&\leq \bigg\| \underset{n,\ell}{\sum}\left(\E\|\mV^*\vd_{n\ell}\|_2^2 (\va_n\vd_{n\ell}^* \boxtimes \va_n\vd_{n\ell}^*)\right)\bigg\|\label{eq:var-sub1},
\end{align}
where the last inequality follows form the fact that $\|\PT\| \leq 1$. Since $\va_n\vd_{n\ell}^* \boxtimes \va_n\vd_{n\ell}^* = \va_n\va_n^* \otimes \vd_{n,\ell}\vd_{n,\ell}^*$, and a simple calculation reveals the expectation 
\[
\E\|\mV^*\vd_{n\ell}\|_2^2 ( \vd_{n\ell}\vd_{n\ell}^*) = \big[  \|\mV^*\mI_{\setB_\ell}\|_{\F}^2\mI_{\setB_\ell}+2\mI_{\setB_\ell}\mV\mV^*\mI_{\setB_\ell}-2\text{diag}(\mI_{\setB_\ell}\mV\mV^*\mI_{\setB_\ell})\big] \preccurlyeq 3 \|\mV^*\mI_{\setB_\ell}\|_{\F}^2 
\]
where for $\text{diag}(\mX)$ is the diagonal matrix obtained by setting the off-diagonal entries of $\mX$ to zero, and $\mI_{\setB_\ell}$ denotes the $W \times W$ identity matrix with ones only at the diagonal positions indexed by $\setB_\ell$. This directly implies that 
\begin{align}\label{eq:Variance-sub}
\bigg\|\sum_{n,\ell}\E\|\mV^*\vd_{n\ell}\|_2^2 (\va_n\vd_{n\ell}^* \boxtimes \va_n\vd_{n\ell}^*)\bigg\|& \leq \bigg\|\bigg[\sum_{\ell \in [\Omega]}  3\|\mV^*\mI_{\setB_\ell}\|_{\F}^2\mI_{\setB_\ell}\bigg] \otimes \bigg[\sum_{n \in \Gamma_p} \va_n\va_n^*\bigg]   \bigg\|\notag\\
& \leq \frac{3}{P}\max_{\ell \in [\O]} \|\mV^*\mI_{\setB_\ell}\|_{\F}^2 = 3\rho^2 \frac{R}{P\O}, 
\end{align}
where the last equality follows from the definition of the coherence in \eqref{eq:coherence}.
Plugging \eqref{eq:Variance-sub} in \eqref{eq:var-main}, we have the bound 
\begin{align}
\sigma^2 &\leq C\left(\mu^2\frac{ RW}{M\O} +\rho^2\frac{R}{\O}\right).\label{eq:Lem1-var}
\end{align}
Finally, we calculate the Orlicz norm, the last ingredient to obtain the Bernstein bound. First, it is important to see that 
\begin{equation*}
P\|\mathcal{Z}_{n\ell} - \E \mathcal{Z}_{n\ell}\| \leq 2P\|\mathcal{Z}_{n\ell}\| = 2P\|\mathcal{Z}_{n\ell}\|_{\F} = 2P\|\PT(\va_n\vd_{n\ell}^*)\|_{\F}^2,
\end{equation*}
where the second-last equality follows form the fact that $\mathcal{Z}_{n\ell}$ is the rank-1 operator. Using the last equation, and \eqref{eq:PTA-Fro-norm}, we have 
\begin{align}
U_1&= \max_{n\in[N]} \max_{\ell \in [\Omega]}2P\big\|\|\PT(\va_n\vd_{n\ell}^*)\|_{\F}^2\big\|_{\psi_1}  \leq 2P\max_{n\in[N]} \max_{\ell \in [\Omega]} \bigg\|\frac{W}{\O}\|\mU^*\va_n^*\|_2^2 + \frac{1}{P}\|\mV^*\vd_{n\ell}\|_2^2\bigg\|_{\psi_1}\notag\\
&\leq CP\left(\mu^2 \frac{R}{PM}\frac{W}{\O} + \frac{1}{P}\max_{\ell \in [\Omega]}\|\mV^*\mI_{\setB_\ell}\|_{\F}^2\right) \leq  C\left(\mu^2\frac{RW}{M\O}+\rho^2\frac{R}{\O}\right)\label{eq:orlicz-bound}
\end{align}
Moreover, a simple calculation, and using the facts that  $\mu^2 \leq M/R$, and $\rho^2 \leq W/R$ shows that $\log (M\O \cdot U_1^2/\sigma^2) \leq C \log (WM)$. Using this together with \eqref{eq:Lem1-var}, and \eqref{eq:orlicz-bound}, and using $t = \beta \log(WM)$ in the Bernstein's inequality in Proposition \ref{prop:matbernpsi}, we have
\begin{align*}
\|\PT\setA_p^*\setA_p\PT-\PT\| &\leq C \max\bigg\{\sqrt{\left(\frac{ RW}{M\O} +\rho^2\frac{R}{\O}\right)}\sqrt{\beta\log(WM)}, ~\left(\frac{RW}{M\O}+\rho^2\frac{R}{\O}\right)\beta\log^2(WM)\bigg\}
\end{align*} 
We can conclude now that choosing $\O \geq C_\beta(\mu^2R(W/M)+ \rho^2 R)\log^2(WM)$ ensures that $\|\PT\setA_p^*\setA_p\PT-\PT\| \leq 1/2$, which proves the lemma after using the fact that $W \geq M$. 

\subsection{Proof of Lemma \ref{lem:JL}}\label{sec:JL}
Just as in the proof of Lemma \ref{lem:injectivity}, we will start with writing the $\setA_p^*\setA_p(\mW_{p-1})$ as a sum of independent random matrices using \eqref{eq:cAtcA} as follows
\begin{align*}
\setA_p^*\setA_p(\mW_{p-1})  = \underset{n \in \Gamma_p}{\sum}\sum_{\ell \in [\O]}  P\left[\va_n\va_n^*\mW_{p-1}\vd_{n\ell}\vd_{n\ell}^*\right].
\end{align*}
Recall that $\vd_{n\ell}$ are random binary defined earlier. Then the expectation of the random quantity above is
\begin{align*}
\E \setA_p^*\setA_p(\mW_{p-1})  &= \underset{n \in \Gamma_p}{\sum}\sum_{\ell \in [\O]}  P\E\left[\va_n\va_n^*\mW_{p-1}\vd_{\ell n}\vd_{\ell n}^*\right] = \underset{n \in \Gamma_p}{\sum}\sum_{\ell \in [\O]} P\left[\va_n\va_n^*\mW_{p-1} \mI_{\setB_\l} \right]= \mW_{p-1},
\end{align*}
where the last two equalities follow from the fact that 
\begin{equation}\label{eq:useful-facts-JL}
\sum_{n \in \Gamma_p}\va_n\va_n^* = \frac{1}{P}\mI_M,\quad\text{and} \quad \sum_{\ell \in [\O]} \E \vd_{n\ell}\vd_{n\ell}^* = \sum_{\ell \in [\O]} \mI_{\setB_\ell} = \mI_W. 
\end{equation}
We bound the operator norm $\|\setA_p^*\setA_p(\mW_{p-1})-\mW_{p-1}\|$. In light of discussion above,  $\setA_p^*\setA_p(\mW_{p-1})-\mW_{p-1}$  can be expressed as a following sum of independent, and zero mean random matrices
\[
\setA_p^*\setA_p(\mW_{p-1}) - \mW_{p-1} = \sum_{n,\ell} P\left[\va_n\va_n^*\mW_{p-1}\vd_{n\ell}\vd_{n\ell}^*-\E \va_n\va_n^*\mW_{p-1}\vd_{n\ell}\vd_{n\ell}^*\right],
\]
where $\sum_{n,\ell}$ is a shorthand for $\underset{n \in \Gamma_p}{\sum}\underset{\ell \in[\O]}{\sum}$. Define $
\mZ_{n\ell} := P\left[\va_n\va_n^*\mW_{p-1}\vd_{n\ell}\vd_{n\ell}^*-\E \va_n\va_n^*\mW_{p-1}\vd_{n\ell}\vd_{n\ell}^*\right].$ To compute the variance in \eqref{eq:matbernsigma}, we start with
\begin{align*}
&\sum_{n,\ell}\E \mZ_{n\ell}\mZ_{n\ell}^*= P^2\sum_{n,\ell}\big[\tfrac{W}{\O}\E \big( (\va_n^*\mW_{p-1}\vd_{n\ell})^2 \va_n\va_n^*\big)- \big(\E \va_n\va_n^*\mW_{p-1}\vd_{n\ell}\vd_{n\ell}^*\big)\big(\E \va_n\va_n^*\mW_{p-1}\vd_{n\ell}\vd_{n\ell}^*\big)^*\big],
\end{align*}
where we used the fact that $\|\vd_{n\ell}\|_2^2 = W/\O$. Since $\E\mZ_{n\ell}\mZ_{n\ell}^*$ is a symmetric positive-semidefinite matrix, that is, $\E \mZ_{n\ell}\mZ_{n\ell}^* \succcurlyeq \mathbf{0}$, this together with definition of $\mZ_{n\ell}$ implies that 
\[
\sum_{n,\ell}\big[\tfrac{W}{\O}\E \big( (\va_n^*\mW_{p-1}\vd_{n\ell})^2 \va_n\va_n^*\big)\big] \succcurlyeq \sum_{n,\ell}\big[\big(\E \va_n\va_n^*\mW_{p-1}\vd_{n\ell}\vd_{n\ell}^*\big)\big(\E \va_n\va_n^*\mW_{p-1}\vd_{n\ell}\vd_{n\ell}^*\big)^*\big],
\]
and, therefore, 
\begin{align}\label{eq:BVariance1}
\bigg\|\sum_{n,\ell}\E \mZ_{n\ell}\mZ_{n\ell}^*\bigg\| &\leq  P^2\frac{W}{\O}\bigg\|\sum_{n,\ell}\E \big( (\va_n^*\mW_{p-1}\vd_{n\ell})^2 \va_n\va_n^*\big)\bigg\|\leq P^2\frac{W}{\O}\sum_{n,\ell} \|\va_n^*\mW_{p-1}\mI_{\setB_\ell}\|_2^2 \va_n\va_n^*\notag\\
& \leq PW \max_{n \in [N]}\max_{\ell \in [\Omega]} \|\va_n^*\mW_{p-1}\mI_{\setB_\ell}\|_2^2 =  W \max_{m \in [M]}\max_{\ell \in [\Omega]} \|\ve_m^*\mW_{p-1}\mI_{\setB_\ell}\|_2^2 \notag\\
& \leq \nu_{p-1}^2 R \frac{W/M}{\Omega},
\end{align}
where the inequalities follow by using \eqref{eq:useful-facts-JL}, the definition of coherence $\nu_{p-1}^2$ in \eqref{eq:nup}, and \eqref{eq:a_n-e_m}.

For the second variance term in \eqref{eq:matbernsigma}, we skip through similar step as for the first term and land directly at 
\begin{align}\label{eq:interim-BVariance}
\bigg\|\underset{n \in \Gamma_p}{\sum}\sum_{\ell \in [\O]}\E \mZ_{n\ell}\mZ_{n\ell}^*\bigg\| & \leq \bigg\|\underset{n \in \Gamma_p}{\sum}\sum_{\ell \in [\O]}P^2 \|\va_n\|_2^2\E \big[ (\va_n^*\mW_{p-1}\vd_{n\ell})^2 \vd_{n\ell}\vd_{n\ell}^*\big]\bigg\|\notag\\
& = P \bigg\|\underset{n \in \Gamma_p}{\sum}\sum_{\ell \in [\O]} \E \big[ (\va_n^*\mW_{p-1}\vd_{n\ell})^2 \vd_{n\ell}\vd_{n\ell}^*\big]\bigg\|,
\end{align}
where the last equality is the result of \eqref{eq:a_n-e_m}. One can show that for a fixed vector $\vx \in \R^W$, and the fact that $\vd_{n\ell}$ is a vector with independent Rademacher random variables at locations indexed by $\setB_\ell$, and zero elsewhere, the following 
\begin{align}\label{eq:useful-fact2-JL}
\E \big[(\vx^*\vd_{n\ell})^2 \vd_{n\ell}\vd_{n\ell}^*\big]  \leq \|\vx_{\setB_\ell}\|_2^2 \mI_{\setB_\ell} + 2 \vx_{\setB_\ell}\vx_{\setB_\ell}^* \preccurlyeq 3\|\vx_{\setB_\ell}\|_2^2 \mI_{\setB_\ell}
\end{align}
holds, where $\vx_{\setB_\ell}$ is equal to $\vx$ on $\setB_\ell$, and zero elsewhere. Moreover, $\mI_{\setB_\ell}$ is a diagonal matrix with ones at $\setB_\ell$, and zero elsewhere. Using \eqref{eq:useful-fact2-JL} with $\vx^* = \va_n^*\mW_{p-1}$, we have 
\begin{align}\label{eq:BVariance2}
& \bigg\|\underset{n\in \Gamma_p}{\sum}\sum_{\ell \in [\O]}\E \mZ_{n\ell}\mZ_{n\ell}^*\bigg\|\leq 3P\bigg\|\underset{n \in \Gamma_p}{\sum}\sum_{\ell \in [\O]} \|\va_n^*\mW_{p-1}\mI_{\setB_\ell}\|_2^2\mI_{\setB_\ell}\bigg\|\notag\\
&\qquad \leq 3P \underset{n \in \Gamma_p}{\sum}\max_{\ell \in [\Omega]} \|\va_n^*\mW_{p-1}\mI_{\setB_\ell}\|_2^2 \leq 3PM\max_{n \in \Gamma_p}\max_{\ell \in [\O]} \|\va_n^*\mW_{p-1}\mI_{\setB_\ell}\|_{2}^2 \leq 3 \nu_{p-1}^2 \frac{R}{\Omega},
\end{align}
where in the last inequality, we use the definition of $\nu_{p-1}^2$ in \eqref{eq:coherence-bound} combined with \eqref{eq:a_n-e_m}. In light of \eqref{eq:matbernsigma}, the maximum of  \eqref{eq:BVariance1}, and \eqref{eq:BVariance2} accounts for the variance $\sigma^2$
\begin{align}\label{eq:B-Variance}
\sigma^2 \leq 3\nu_{p-1}^2R \frac{(W/M)+1}{\Omega} \leq 6 \nu_{p-1}^2R \frac{(W/M)}{\Omega},
\end{align}
where in the last inequality follows from our assumption that $ W \geq M$. Finally, we need to compute an upper bound on the Orlicz norm of the random variable $\|\mZ_{n\ell}\|$. Begin by using similar simple facts above that 
\begin{align}\label{eq:Zij-operatornorm}
\|\mZ_{n\ell}\| &\leq 2P \|\va_n\va_n^*\mW_{p-1}\vd_{n\ell}\vd_{n\ell}^*\| =  2P \|\va_n\|_2\|\vd_{n\ell}\|_2 |\va_n^*\mW_{p-1}\vd_{n\ell}|= 2P\frac{1}{\sqrt{P}}\sqrt{\frac{W}{\Omega}}|\va_n^*\mW_{p-1}\vd_{n\ell}|.
\end{align}
Using standard calculations; see, for example, \cite{vershynin10in}, we can compute the following finite bound on the Orlicz-1 norm of the random variable $|\va_n^*\mW_{p-1}\vd_{n\ell}|$
\begin{align*}
\max_{n \in \Gamma_p}\max_{\ell \in [\Omega]}\|\va_n^*\mW_{p-1}\vd_{n\ell}\|_{\psi_2}& = \max_{m \in [M]} \max_{\ell \in [\Omega]}\frac{1}{P}\|\ve_m^*\mW_{p-1}\vd_{n\ell}\|_{\psi_2}\\
&\leq \frac{C}{P}\max_{n \in \Gamma_p}\max_{\ell \in [\Omega]}\|\ve_m^*\mW_{p-1}\mI_{\setB_\ell}\|_{2}\leq C\nu_{p-1}\sqrt{\frac{R}{PM\O}},
\end{align*}
where the last inequality follows from \eqref{eq:nup}. Using \eqref{eq:Zij-operatornorm}, $P=N/M$, and \eqref{eq:matpsinorm} then directly gives us 
\begin{align}\label{eq:B-Orlicz}
U^2_2 := \max_{n \in \Gamma_p}\max_{\ell \in [\Omega]} \|\mZ_{n\ell}\|^2_{\psi_2} \leq C\nu^2_{p-1} R\frac{W/M}{\O^2}.
\end{align}
Moreover using a loose bound on variance $\sigma^2 \leq 3\nu_{p-1}^2R ((W/M)+1)/\Omega$, it is easy to see that
\[
\log\bigg[\frac{\O M\cdot U_2^2}{\sigma^2}\bigg] \leq C\log M.
\]
The results in \eqref{eq:BVariance1}, \eqref{eq:BVariance2}, and \eqref{eq:B-Orlicz} can be plugged in Proposition \ref{prop:matbernpsi} to obtain 
\begin{align}
&\|\setA_p^*\setA_p(\mW_{p-1})-\mW_{p-1}\|_{\F} \leq C\max\left\{\sqrt{\nu_{p-1}^2 R\frac{W/M}{\O}}\sqrt{\beta\log(WM)},\sqrt{\nu_{p-1}^2R\frac{W/M}{\O^2}}\beta\log^{3/2}(WM)\right\}\label{eq:JL-bound}
\end{align}
with $t = (\beta-1) \log(WM)$, which holds with probability at least $1-\mathcal{O}(WM)^{-\beta}$. Recall that $W \geq M$. The lemma now follows by using the bound on $\nu_{p-1}^2$ in \eqref{eq:coherence-bound}, and choosing $\Omega \geq C_\beta R \nu^2 (W/M) \log^{3/2}W$ for a universal constant $C_\beta$ that only depends on a fixed parameter $\beta \geq 1$.
\section{Proof of Theorem \ref{thm:stable-rec}}\label{sec:Stability}
The first step in the proof is the following Oracle inequality in \cite{koltchinskii10nu} that gives an upper bound on the deviation of $\widehat{\mX}$ in \eqref{eq:KLT-solution} from the true solution $\mX_0$ in the mean squared sense. 
\begin{thm}[Oracle inequlaity in \cite{koltchinskii10nu}]\label{thm:interim-stability-result}
	Suppose we observe the noisy measurements $\vy$ in \eqref{eq:noisy-measurements} of $\mX_0$ with $\text{rank}(\mX_0) \leq R$, and  it is given that $\|\setA^*(\vy)-\E \setA^*(\vy)\|\leq \lambda/2$ fro some scalar $\lambda \geq 0$. Then the solution $\widehat{\mX}$ of the nuclear norm penalized estimator in \eqref{eq:KLT-estimator} obeys $\|\widehat{\mX} -\mX_0\|_{F}^2 \leq \min \big(2\lambda\|\mX_0\|_*,1.46\lambda^2R\big).$
\end{thm}
All that is required is to bound the spectral norm:
\begin{align}\label{eq:stable-rec-ingdt}
\|\setA^*(\vy)-\E \setA^*(\vy)\| & \leq \|\setA^*\setA(\mX_0)-\mX_0\| + \|\setA^*(\vxi)\|
\end{align}
We begin by bounding the first term above $\|\setA^*\setA(\mX_0)-\mX_0\|$ using a corollary to Lemma \ref{lem:JL} stated as follows. 
\begin{cor}\label{cor:JL}
	Let $\mX_0$ be a fixed $M \times W$ matrix defined in \ref{eq:X0def} then 
	\begin{align}
	&\|\setA^*\setA(\mX_0)-\mX_0\| \leq C\|\mX_0\|_{\F}\max\left\{\sqrt{\nu^2\frac{W/M}{\O }}\sqrt{\beta\log W},\sqrt{\nu^2\frac{W/M}{\O^2}}\beta\log^{3/2}W\right\}\label{eq:JL-corllary-bound}.
	\end{align}
	with probability at least $1-\setO(W^{-\beta})$.
\end{cor}
\begin{proof}
	The proof of the corollary is very similar to the proof of Lemma \ref{lem:JL}--- the main difference is that  the number of partitions is  $P = N/M = 1$. Moreover, we have $\mX_0$ in place of $\mW_{p-1}$, and in the proof development replace $\|\mW_{p-1}\|_{\F} \leq 2^{-p+1}\sqrt{R}$ with $\|\mX_0\|_{\F}$ to obtain bound in \eqref{eq:JL-corllary-bound}, which is understandably similar to \eqref{eq:JL-bound}
\end{proof}
\begin{lem}\label{lem:stable-rec}
	Fix $\beta \geq 1$. The for a sufficiently large constant $C$, the following bound
	\begin{align*}
	\|\setA^*(\vxi)\| \leq C\|\vxi\|_{\psi_2}\sqrt{\frac{W/M}{\O}}\sqrt{\beta\log W}
	\end{align*}
	holds with probability at least $1-\setO(W^{-\beta}).$
\end{lem}
Using Corollary \ref{cor:JL}, and Lemma \ref{lem:stable-rec}, we can bound \eqref{eq:stable-rec-ingdt}, and obtain 
\[
\lambda \geq C\sqrt{ \frac{(W/M)(\nu^2\|\mX_0\|_{\F}^2 + \|\vxi\|_{\psi_2}^2)}{\O}}\sqrt{\beta\log^{3/2} W}
\]
with probability at least $1-\setO(W^{-\beta})$. Taking $\|\mX_0\|_{\F} = 1$ without loss of generality, and $\O \geq C_\beta\nu^2R\tfrac{W}{M}\log^{3/2}W$, where $C_\beta$ is a universal constant that depends on a fixed parameter $\beta \geq 1$, allows us to choose $\lambda \geq \sqrt{\|\vxi\|_{\psi_2}^2/R}$. With this, an application of Theorem \ref{thm:interim-stability-result} proves Theorem \ref{thm:stable-rec}.

\subsection{Proof of Lemma \ref{lem:stable-rec}}
The proof of this lemma requires the use of matrix Bernstein's inequality \ref{prop:matbernpsi}. 
As it is required to bound the spectral norm of the sum $\setA^*(\vxi) = \sum_{n,\ell} \xi_{n}[\ell]\va_n\vd^*_{n\ell}$, we start with the summands $\mZ_{n\ell} = \xi_n[\ell]\va_n\vd^*_{n\ell}$. Because variables $\xi_{n\ell}$ are zero mean, it follows that $\E\mZ_{n\ell} = \mathbf{0}$. We start by computing the variance 
\begin{align}\label{eq:strec-var1}
\bigg\|\sum_{n,\ell} \E\mZ_{n\ell}\mZ_{n\ell}^*\bigg\| &= \frac{W}{\Omega}\bigg\|\sum_{n,\ell} \E |\xi_{n}[\ell]|^2 \cdot \va_n\va_n^*\bigg\| = \frac{W}{\Omega}\bigg\| \max_{n \in [N]}\sum_{\ell \in [\Omega]} \E |\xi_n[\ell]|^2 \cdot \sum_{n \in [N]} \va_n\va_n^* \bigg\| \leq \frac{W}{N\Omega}\|\vxi\|_{\psi_2}^2,
\end{align}
where the last inequality follows from the facts that $\sum_{n \in [N]} \va_n\va_n^* = \mI_M$, and that  $\xi_n[\ell]$ for $(n,\ell) \in [N]\times[\Omega]$ are independent and identically distributed implying $\max_{n \in [N]}\sum_{\ell \in [\Omega]} \E |\xi_n[\ell]|^2 \leq C \|\vxi\|_{\psi_2}^2/N$.

Similarly arguments lead to 
\begin{align}\label{eq:strec-var2}
\bigg\|\sum_{n,\ell} \E\mZ_{n\ell}\mZ_{n\ell}^*\bigg\| &=  \bigg\|\sum_{n,\ell}\|\va_n\|_2^2 \E |\xi_n[\ell]|^2 \E \vd_{n\ell}\vd_{n\ell}^*\bigg\| = \frac{M}{N} \bigg\|\sum_{n,\ell} \E |\xi_n[\ell]|^2 \mI_{\setB_\ell}\bigg\| \leq \frac{M}{N\Omega}\|\vxi\|_{\psi_2}^2.
\end{align}
Combining \eqref{eq:strec-var1} and \eqref{eq:strec-var2} and using \eqref{eq:matbernsigma} gives $\sigma^2 \leq  \|\vxi\|^2_{\psi_2}PW/M\Omega,$ where $P = N/M$, and we assume that $W\geq M$. 
The final quantity required is the Orlicz norm of  $\mZ_{n\ell}$, which is simply
\begin{align*}
\|\mZ_{n\ell}\|^2_{\psi_2} & = \|\xi_n[\ell]\|_{\psi_2}^2\|\va_n\vd_{n\ell}^*\|^2 = C \frac{1}{N\O}\|\vxi\|_{\psi_2}^2\cdot \frac{MW}{N\Omega} = C\|\vxi\|_{\psi_2}^2 \frac{1}{P^2} \frac{W}{M\Omega^2}
\end{align*}
then 
\[
\|\mZ_{n\ell}\|_{\psi_2} \log^{1/2}\left(\frac{M\O\cdot\|\mZ_{n\ell}\|_{\psi_2}^2}{\sigma^2}\right) \leq C \sqrt{\|\vxi\|_{\psi_2}^2\frac{W/M}{P^2\O^2}}\log^{1/2} (MW).
\]
At the end, using $t = \beta\log W$ in the Bernstein's bound \eqref{eq:matbernpsi}, we have 
\[
\|\setA^*(\vxi)\| \leq C\max \left\{\|\vxi\|_{\psi_2}\sqrt{\frac{W/M}{\O}}\sqrt{\beta\log (MW)}, \|\vxi\|_{\psi_2}\sqrt{\frac{W/M}{P^2\O^2}}(\beta\log^{3/2} (MW))\right\},
\]
and using the fact that $P = \setO(\log W)$, and $M \leq W$ from \eqref{eq:P-bound} proves the result.

\bibliographystyle{IEEEtran}
\bibliography{IEEEabrv,DemodArray-references}


%
%
%


\end{document}